\documentclass[sigconf, nonacm]{acmart}

\usepackage{graphicx} 
\usepackage{hyperref}
\hypersetup{
    colorlinks=true,
    urlcolor = blue, 
    linkcolor = blue, 
    citecolor = red 
}
\usepackage{url}            
\usepackage{booktabs}       
\usepackage{amsfonts}       
\usepackage{amsmath,stackengine}
\usepackage{nicefrac}       
\usepackage{microtype}      
\usepackage[table,dvipsnames]{xcolor}         
\usepackage{mathtools}
\usepackage{svg}            
\usepackage{caption}
\usepackage{subcaption}
\usepackage{mathpartir}     

\usepackage[conf={restate,no link to proof}]{proof-at-the-end}
\usepackage{comment}
\usepackage[normalem]{ulem}
\usepackage[linesnumbered,ruled,vlined]{algorithm2e}
\usepackage{enumitem}

\usepackage{color} 
\usepackage{xargs} 
\usepackage{xspace} 
\usepackage{minted}

\usepackage{accents}
\usepackage{booktabs}
\usepackage{makecell}
\usepackage{multirow}
\usepackage[normalem]{ulem}
\usepackage{cancel}

\usepackage{tabularray}
\DefTblrTemplate{contfoot-text}{normal}{}
\SetTblrTemplate{contfoot-text}{normal}

\DefTblrTemplate{conthead-text}{normal}{}
\SetTblrTemplate{conthead-text}{normal}

\DefTblrTemplate{capcont}{default}{}

\usepackage{pifont}
\usepackage{tikz} 
\usetikzlibrary{shapes,positioning,calc}
\tikzset{every picture/.style={remember picture}}
\usepackage{colortbl}

\colorlet{lightgray}{gray!20}
\newlength\stextwidth

\usepackage{listings} 
\newtheorem{theorem}{Theorem}[section]

\newtheorem{definition}[theorem]{Definition}

\newtheorem{lemma}[theorem]{Lemma}

\newtheorem{example}[theorem]{Example}

\allowdisplaybreaks

\def\notationcolor{black} 

\newcommand{\notation}[2]{\newcommand{#1}{{\textcolor{\notationcolor}{\ensuremath{#2}}}\xspace}}

\newcommand{\term}[2]{\newcommand{#1}{\textcolor{\notationcolor}{#2}\xspace}}

\DeclarePairedDelimiter\set{\{}{\}}            

\term{\priv}{Differential Privacy}
\term{\ekey}{entity key}

\term{\toolname}{\textsc{DP4SQL}}
\term{\pinq}{\textsc{PINQ}}
\term{\flex}{\textsc{FLEX}}
\term{\googledp}{\textsc{GoogleDP}}
\term{\privatesql}{\textsc{PrivateSQL}}
\term{\dopsql}{\textsc{DOP-SQL}}
\term{\opendptumult}{\textsc{OpenDP-Tumult}}

\notation{\mech}{\mathcal{M}}  
\notation{\att}{A} 
\notation{\atts}{\mathcal{A}} 
\notation{\rel}{R} 
\notation{\rels}{\mathcal{R}} 
\notation{\ent}{R^\star} 
\notation{\enttbl}{\tbl^\star} 
\notation{\entrec}{e} 
\notation{\ents}{\mathcal{E}} 
\notation{\schema}{\mathbb{S}} 
\notation{\db}{D} 
\notation{\dbs}{\mathcal{D}} 
\notation{\tbl}{T}
\notation{\record}{r}
\notation{\raexpr}{S}   
\notation{\query}{Q}
\notation{\view}{V} 
\notation{\plabel}{\mathbb{P}}
\notation{\slabel}{\mathbb{S}}
\notation{\constr}{\mathcal{C}} 
\notation{\keyconstr}{\mathcal{K}} 
\notation{\refconstr}{\mathcal{I}} 
\notation{\nulconstr}{\mathcal{N}} 
\notation{\key}{K}
\notation{\trans}{\mathcal{T}}
\notation{\parents}{\ensuremath{P_\schema}}
\notation{\world}{\mathcal{W}}

\notation{\pks}{\mathcal{PK}} 
\notation{\pk}{PK} 
\notation{\fks}{\mathcal{FK}} 
\notation{\fk}{FK} 
\notation{\owner}{\mathcal{O}}
\notation{\refers}{\ensuremath{\rightarrow\cdots\rightarrow}} 

\notation{\public}{Public}
\notation{\powners}{powners} 

\notation{\select}{\ensuremath{\sigma}}
\notation{\pred}{\ensuremath{\varphi}}
\notation{\proj}{\ensuremath{\pi}}
\notation{\join}{\ensuremath{\Join}}
\notation{\ljoin}{\ensuremath{\ltimes}}
\notation{\rjoin}{\ensuremath{\rtimes}}
\notation{\tcjoin}{\ensuremath{\Join_{\texttt{tc}}}}
\notation{\rename}{\ensuremath{\rho}}
\notation{\project}{\ensuremath{\pi}}
\notation{\predicate}{\ensuremath{\varphi}}
\notation{\distinct}{\ensuremath{\kappa}}
\notation{\groupby}{\ensuremath{\gamma}}
\notation{\ojoin}{\setbox0=\hbox{$\Join$}
  \rule[.02ex]{.25em}{.4pt}\llap{\rule[1.05ex]{.25em}{.4pt}}}
\notation{\leftouterjoin}{\mathbin{\ojoin\mkern-7.8mu\Join}} 
\notation{\trunc}{\tau}

\notation{\attr}{\ensuremath{\operatorname{attr}}}  
\notation{\privatt}{\texttt{private}}  
\notation{\diff}{\ensuremath{\operatorname{diff}}}  
\notation{\dom}{\ensuremath{\operatorname{dom}}} 
\notation{\cnt}{\ensuremath{\texttt{CNT}}}  
\notation{\avg}{\ensuremath{\texttt{AVG}}}    
\notation{\fsum}{\ensuremath{\texttt{SUM}}}    
\notation{\med}{\ensuremath{\operatorname{MED}}}    
\notation{\range}{\ensuremath{\texttt{Range}}}  
\notation{\fmax}{\ensuremath{\texttt{MAX}}}  

\newcommand{\adel}[2]{\ensuremath{\texttt{Del}_{#1}}}
\newcommand{\aadd}[2]{\ensuremath{\texttt{Add}_{#1}}}

\newcommand{\arep}[2]{\ensuremath{\texttt{Rep}_{#1}^{#2}}}

\notation{\lbl}{\ensuremath{\operatorname{\mathcal{L}}}}  
\notation{\ppolicy}{\ensuremath{\operatorname{\mathcal{P}}}}  
\notation{\tyenv}{\ensuremath{\operatorname{\Gamma}}}  
\notation{\atyenv}{\ensuremath{\operatorname{\Delta}}}  

\notation{\mf}{\texttt{mf}} 
\notation{\mmf}{\texttt{mmf}} 
\notation{\own}{\ensuremath{\otimes}} 
\notation{\mown}{\ensuremath{\hat\otimes}} 
\notation{\po}{\ensuremath{\preceq}}  
\notation{\lub}{\ensuremath{\wedge}}  
\notation{\glb}{\ensuremath{\vee}}  
\newcommand{\pdel}{\ensuremath{\mathbb{DEL}}} 
\newcommand{\prep}{\ensuremath{\mathbb{REP}}} 
\newcommand{\ppub}{\ensuremath{\mathbb{PUB}}} 

\notation{\nbrs}{\mathcal{N}}  
\notation{\nbr}{N}  
\notation{\action}{\alpha}  

\notation{\ownarrow}{\stackon[-1pt]{\ensuremath{\rightarrow}}{\text{\tiny{ow }}}}
\notation{\fkarrow}{\stackon[-1pt]{\ensuremath{\rightarrow}}{\text{\tiny{fk }}}}
\notation{\trownarrow}{\stackon[-1pt]{\ensuremath{\twoheadrightarrow}}{\text{\tiny{ow\phantom{...}}}}}
\notation{\trfkarrow}{\stackon[-1pt]{\ensuremath{\twoheadrightarrow}}{\text{\tiny{fk\phantom{...}}}}}

\AtBeginDocument{%
  }

\setcopyright{acmlicensed}
\copyrightyear{2018}
\acmYear{2018}
\acmDOI{XXXXXXX.XXXXXXX}
\acmConference[]{}{}{}

\begin{document}

\title{\toolname: Differentially Private SQL with Flexible Privacy Policies}

\author{Andrew Cascio}
\affiliation{%
  \institution{Duke University}
  \state{North Carolina}
  \country{USA}}
\email{ac940@duke.edu}

\author{KinChin Tong}
\affiliation{%
  \institution{Binghamton University}
  \state{New York}
  \country{USA}}
\email{ktong1@binghamton.edu}

\author{Daniel Kifer}
\affiliation{%
  \institution{Penn State University}
  \state{Pennsylvania}
  \country{USA}}
\email{dkifer@cse.psu.edu}

\author{Zeyu Ding}
\affiliation{%
  \institution{Binghamton University}
  \state{New York}
  \country{USA}}
\email{dding1@binghamton.edu}

\author{Danfeng Zhang}
\affiliation{%
  \institution{Duke University}
  \state{North Carolina}
  \country{USA}}
\email{dz132@duke.edu}


\begin{abstract}
The plausible deniability model of differential privacy for single-table datasets
is well-understood. However, applying differential privacy to relational
databases is much trickier: each application needs flexibility in specifying 
the pieces of information about an entity, spread across multiple relations, that require plausible
deniability guarantees. Existing differentially private SQL systems only support
rigid privacy policies. Even seemingly small changes, such as specifying that some
tables need to protect the existence of records while others only need to protect
the record contents, require significant manual effort in updating their privacy accountants
and proving their correctness.

One example of a challenge is the presence of partially public data. Public columns in a table
(e.g., faculty names in a university dataset and partial course enrollment information) 
can cause some queries to require more noise (compared to fully private data), while others require less noise. 
This kind of reasoning is not supported in existing
systems. Another example is when different parts of records (e.g., demographics, financial data) require different
levels of privacy protection. Again, existing differentially private SQL systems need to rewrite their
rules for calculating query stability in order to support such a feature.
This paper presents \toolname, a differentially private SQL system that allows data curators to better
customize the plausible deniability requirements for their relational databases. This avoids the drawbacks of the
``one-size-fits-all'' systems that would either underprotect the data or inject too much noise into query answers.
\end{abstract}

\begin{CCSXML}
<ccs2012>
   <concept>
       <concept_id>10002978.10003018</concept_id>
       <concept_desc>Security and privacy~Database and storage security</concept_desc>
       <concept_significance>500</concept_significance>
       </concept>
 </ccs2012>
\end{CCSXML}

\ccsdesc[500]{Security and privacy~Database and storage security}

\keywords{differential privacy, inference systems, SQL}


\maketitle

\section{Introduction}
Differential privacy (DP)~\cite{diffpbook, dwork06Calibrating} is a gold standard for creating mechanisms (algorithms) that generate publicly-releasable data products from confidential datasets, while protecting the private information in those datasets. It has an ever-increasing list of real-world deployments, including the U.S. Census Bureau~\cite{tdahdsr,ashwin08:map}, Uber~\cite{FLEX,chorus}, Apple~\cite{appledpscale}, Facebook~\cite{fburlshares}, Microsoft~\cite{DingKY17}, and Google~\cite{rappor,tensorflowprivacy,wilson2019differentially}. 

Using differential privacy is surprisingly complex. It requires specialized knowledge to design mechanisms that produce useful data products while satisfying the mathematical requirements of DP that guarantee privacy protection.
As a result, there is strong interest in creating DP  platforms \cite{opendp,gupt,Roy10Airavat}, especially differentially private SQL systems that ingest a non-expert's SQL queries and produce accurate, privacy-preserving answers
~\cite{PINQ,FLEX,wilson2019differentially,kotsogiannis2019privatesql,berghel2022tumult,dop-sql,chorus,wpinq,atlantic,pioneer}.

However, using such systems does \emph{not} necessarily mean that DP is used properly.
Applying differential privacy correctly requires specifying (1) an appropriate application-dependent plausible deniability model that specifies what pieces of information need to be indistinguishable from each other and (2) the strength of the indistinguishability. 
The plausible deniability model is an open question for DP SQL systems, and is the topic of this paper.
Meanwhile, the strength of the indistinguishability guarantee is a well-understood mathematical concept that links privacy parameters, like the famous $\epsilon$, to limitations on an attacker's ability \cite{dwork06Calibrating,zcdp,renyidp,wasserman2010statistical,fdp} to use the output of a mechanism to make guesses about the information that should be plausibly deniable.  

 Each existing DP SQL system \cite{PINQ,FLEX,wilson2019differentially,kotsogiannis2019privatesql,berghel2022tumult,dop-sql,chorus,wpinq,atlantic,pioneer} provides their own hard-coded, often competing,
plausible deniability models. A mismatch between the model and the application requirements could leave the data under-protected or over-protected, with incorrect amount of noise added to query answers. However, in existing systems, data administrators
cannot customize the plausible deniability model. There are several reasons for this limitation. (1) The plausible deniability model is tightly integrated into the privacy calculus used by those systems to guarantee DP. Even small changes to the model would require rewriting and manually proving the correctness of the privacy accounting rules. (2) Real-world requirements are so complex that requiring a data administrator to specify a plausible deniability model is a daunting task (even for data administrators with deep expertise in privacy technology). (3) The way plausible deniability is specified in differential privacy---the ``neighbor relation''---is extremely low-level and cumbersome.

A neighbor relation $\nbrs$ is a set of pairs of databases. If a database pair $(\db, \db')\in\nbrs$, it means that an attacker should have difficulty in determining whether the public data products were created from $\db$ or $\db'$. The difference in contents between $\db$ and $\db'$ is a piece of information that gets plausible deniability guarantees. For example, if the database schema contains only one table and every person can contribute only one record, then the appropriate relation $\nbrs_1$ is \emph{unbounded neighbors}: $(\db,\db')\in\nbrs_1$ if and only if $\db'$ can be obtained from $\db$ by the removal or addition of an arbitrary record. Hence the existence of a record gets plausible deniability. However, if the size of this table is publicly known, then the appropriate relation $\nbrs_2$ is \emph{bounded neighbors}: $(\db,\db')\in\nbrs_2$ if and only if $\db'$ can be obtained from $\db$ by \emph{replacing} one record. Under DP, the same query gets different noise when using the neighbor relations $\nbrs_1$ vs. $\nbrs_2$. 

Incidentally, it is worth noting that no existing DP SQL system can support situations where some tables in a database have publicly known sizes while others do not. Even such a seemingly small detail would require re-defining systems semantics, re-writing their privacy accountants and re-proving their correctness.

Common situations in practice introduce even more complexity.
Consider our running example of a simplified university database schema 
(Figure \ref{fig:university}) consisting of 5 relations. Student(\textbf{uid}, \emph{name}, \emph{major}) lists students enrolled in the university (table $\tbl_1$), and Faculty(\textbf{fid}, \emph{name}, \emph{salary}, \emph{age}) lists faculty members (table $\tbl_2$). These are the 2 types of entities that would need privacy protection. Scholarship(\textbf{aid}, uid, \emph{amount}) provides scholarship information (table $\tbl_3$), Section(\textbf{sid}, fid, \emph{title}) lists the course sections (table $\tbl_4$), and Enrollment(\textbf{eid}, sid, uid, \emph{grade}, \emph{review}) has the enrollment, grade, and student review information for each section (table $\tbl_5$). Complicating matters is the mixed sensitivity of information, even in the same table. For example, faculty name is public, faculty demographics are somewhat sensitive, and financial information is extremely sensitive.
The course section information is public. The students table is fully private. The size of the enrollment in each section (i.e., the section id (sid) column in $\tbl_5$) is public, the grades in the same table are private, and the university may wish to make review scores public while protecting the association between the review and students who gave the review.

To what degree can existing systems support such a scenario? Systems that explicitly track record ownership throughout query execution (e.g., \cite{wilson2019differentially}) cannot handle databases where different entities interact (e.g., faculty assigning grades to students). Some work cannot handle databases with foreign keys (i.e., it is impossible for them to provide plausible deniability for a student and all records owned by the student across different tables) \cite{PINQ,FLEX}. The most sophisticated privacy model in a DP SQL system \cite{kotsogiannis2019privatesql} handles those two cases but cannot support
fine-grained reasoning about different columns (e.g., a query about the number of highly-paid faculty who teach AI courses should require relatively more noise than a query about the number of male faculty who teach AI courses), cannot reason about tables whose sizes are known, and cannot reason about public columns or protect associations (e.g., who left the review, when the review score is public).

Our approach towards more expressive privacy policies, and contributions of the paper are the following:
\begin{itemize}[leftmargin=*]
    \item We propose a simple, high-level column labeling framework that allows a data administrator to specify multiple plausible deniability requirements for each type of entity in a database with foreign keys. Although conceptually simple, it provides support for situations where table sizes are known, some columns are public, other columns need more protection, etc.
    
    \item We propose a flexible, lower-level, plausible-deniability-action framework for specifying plausible deniability requirements in a way that is more suitable for automated reasoning. We develop an inference system with production rules that map the data administrator's column labeling into this more complex lower-level policy specification. This lower-level specification can be mapped into a neighbor relation, and this allows our inference system to further reason soundly about the stability of relational algebra queries, involving joins and aggregations, in order to compute how much noise must be added to the query answers.
    \item We implement this multi-level system, where a high-level human-friendly specification is translated into an expressive automated-reasoning-friendly specification, as a tool which we call \toolname. It is implemented as database middleware that intercepts SQL queries and determines how much noise needs to be added to satisfy all of the plausible deniability policies.
    \item Experiments on TPC-H ~\cite{tpc-h} and a case study on flexible privacy policies validate our approach and show that competing work does not add appropriate noise levels to SQL queries.
\end{itemize}

\begin{figure}
\centering
\includegraphics[width=\linewidth]{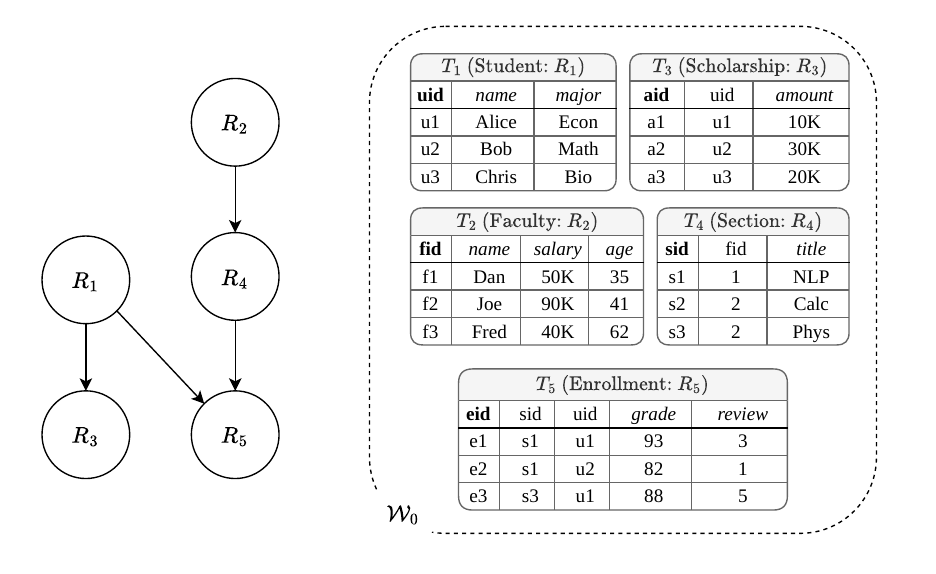}
\caption{The data ownership graph of a university schema (left) and an instance of the same schema, where $\tbl_i$ is the table for relation $\rel_i$ in a hypothetical world $\world_0$ (right).}
\label{fig:university}
\end{figure}

\section{Related Work}
\label{sec:related-work-multitable}
A central challenge for differentially private SQL is supporting practical
\emph{multi-table} queries—especially join-heavy workloads—without either (1)
unsound privacy accounting or (2) overly conservative sensitivity bounds that
destroy utility. Joins can amplify an individual’s contribution across multiple
relations, and this amplification depends on schema constraints (e.g., foreign
keys), data distributions (e.g., fanout), and query structure (e.g., chains of
joins, self-joins, and joins composed with grouping).

\pinq~\cite{PINQ} introduced the idea of integrating privacy accounting into a query-like
programming interface, enforcing bounded contribution and tracking stability of
transformations to calibrate noise.
Following systems in this line focus on expressing analyses as pipelines of
relational operators and ensuring that the accumulated stability remains bounded.
However, general join patterns remain challenging because contribution can grow
rapidly with fanout unless additional structural assumptions (e.g., key-joins) or
explicit clipping/truncation are imposed.

\flex~\cite{FLEX} advanced the state-of-the-art by
providing local sensitivity upper bounds for SQL with reasoning that leverages maximum-frequency (fanout)
bounds to control join amplification.
These approaches are effective for many join patterns but are typically tied to
a fixed privacy interpretation (commonly, under the bounded neighbors setting)
and a table-level policy model. As a result, they often treat
entire records as uniformly private, which can force noise even when only a
subset of attributes (or only specific associations across tables) require
protection.

\privatesql~\cite{kotsogiannis2019privatesql} formalized an entity-level neighboring relation for
\emph{multi-relational} databases with referential constraints and developed a
sensitivity analysis that is cognizant of join structure and ownership induced
by foreign keys.
This was an important step beyond single-table or record-level models because
it aligns privacy with real-world entities that own records in multiple tables.
However, \privatesql’s policy model is intentionally simple: it primarily targets
unbounded (delete-one-entity) semantics and cannot directly express
attribute-level policies where some columns are public and others are private,
nor can it simultaneously mix bounded and unbounded notions of plausibility
across relations.

\section{Notation and Background}\label{sec:notation}

A relational database schema is a pair $\schema=(\rels, \constr)$ where $\rels=\set{\rel_1, \ldots, \rel_n}$ is a set of relations providing metadata about tables in a database, and $\constr$ is a set of integrity constraints that specify foreign key restrictions on the relations. Each relation $\rel$ contains a finite set of attributes $\attr(\rel)=\set{\pk, \att_1, \ldots, \att_m}$\footnote{We require the primary key, $\pk$, to be semantically independent of the data. This assumption is not restrictive, as one can always introduce a unique random identifier as the primary key.}.
The set of possible values for an attribute $\att$ is called the \emph{domain} of $\att$ and denoted by $\dom(\att)$. We also write $\dom(\rel)=\dom(\pk)\times\dom(\att_1)\times\cdots\times\dom(\att_m)$. To distinguish attributes from different relations we write $\rel.\att$ to indicate that $\att$ is an attribute of $\rel$.

\paragraph{Relations, Tables, and Records} 
An instance of a relation $\rel$ is a \emph{table}, denoted $\tbl\subset \dom(\rel)$. Each element $\record \in \tbl$ is called a \emph{row} or a \emph{record}. We use $\record.\att$ to denote the component of $\record$ that corresponds to attribute $\att$, and $\tbl.\att$ for multiset $\set{\record.\att \mid \record\in\tbl}$. For a schema $\mathbb{S}=(\mathcal{R},\mathcal{C})$, $\text{dom}(\mathbb{S})$ is the set of database instances of $\mathcal{R}$ (i.e., tables of relations in $\mathcal{R}$) that satisfy $\mathcal{C}$. An instance of $\schema$ is a \emph{database} $\db\in\dom(\schema)$. 

\paragraph{Integrity Constraints}
The set of integrity constraints $\constr$ specify links between relations. If $\rel.\fk_{\rel'}$ is the foreign key attribute for $\rel'.\pk$, the primary key of $\rel'$, then we write $\rel.\fk_{\rel'}\fkarrow\rel'.\pk$ and say $\rel$ references $\rel'$. For simplicity, we may also write $\rel\fkarrow\rel'$. Moreover, if $\record\in\tbl,\record'\in\tbl'$ and $\record.\fk_{\rel'}=\record'.\pk$, then we write $\record\fkarrow\record'$.  
If there is a path from $\rel$ to $\rel'$ that follows foreign keys, we say that $\rel$ \emph{transitively refers} to $\rel'$. Formally:

\begin{definition}[Transitive Referral]
    A relation $\rel$ transitively refers to a relation $\rel'$ if $\rel\fkarrow\rel'$ or (recursively) if there exists a relation $\rel''$ such that $\rel\fkarrow\rel''$ and $\rel''$ transitively refers to $\rel'$. Similarly, a record $\record\in\tbl$ transitively refers to a record $\record'\in\tbl'$ if $\record\fkarrow\record'$ or if there exists a record $\record''\in\tbl''$ such that $\record\fkarrow\record''$ and $\record''$ transitively refers to $\record'$. In both cases, we use  $\trfkarrow$ to denote transitive referral relation.
\end{definition}
As standard, we require foreign keys to be acyclic (i.e., no relation transitively refers to itself). 

\subsection{Data Ownership Graph}
The \emph{data ownership graph} (e.g., Figure \ref{fig:university}) is a key tool for understanding which entities may potentially have ownership of which records. We say a record $\record'\in\tbl'$  \emph{owns} $\record\in\tbl$, denoted by $\record'\ownarrow\record$, if $\record$ has a foreign key to $\record'$ (i.e., the direction of ownership is the \textbf{reverse} of the direction of foreign keys). Similarly, we say $\record'$ transitively owns $\record$, denoted by $\record'\trownarrow \record$ if and only if $\record\trfkarrow \record'$. We extend this notation to relations in the obvious way. The ownership arrows between relations form the data ownership graph:

\begin{definition}[Data Ownership Graph] \label{def:dependencygraph}
    The \emph{data ownership graph} of a relational schema $\schema=(\rels,\constr)$ is a directed graph $G(\schema)=(V,E)$ where $V=\rels$ and $(\rel', \rel)\in E$ if and only if $\exists~\rel.{\att} \fkarrow \rel'.\pk \in \constr$. 
\end{definition}

We also define the records in $\tbl_2$ owned by a record $\record_1\in\tbl_1$:
\begin{definition}[Ownership]\label{def:ownership}
    Let $\schema=(\rels,\constr)$, let $R_1,R_2\in\mathcal{R}$ be any two relations, and let $\tbl_1\subset\dom(\rel_1)$ and $\tbl_2\subset\dom(\rel_2)$. The records that $\record_1\in\tbl_1$ owns in $\tbl_2$ is defined as:
    \begin{align*}
        \own(\tbl_1,\tbl_2,\record_1)=\{\record_2\in\tbl_2\mid\record_1\trownarrow\record_2\}
    \end{align*}
\end{definition}

Some relations (e.g., Faculty and Student) are called \emph{entity relations} because they define entities who need privacy protection. Let $\ents$ be the set of entity relations. Relations not in $\ents$ are called \emph{non-entity relations}. Entity relations are ``roots'' in the data ownership graph
and they can transitively own records in non-entity relations. 
For example, in the data ownership graph in Figure \ref{fig:university}, $\rel_1$ (with table $\tbl_1$) is the entity relation Student and it has an ownership arrow to Enrollment ($\rel_5$) because of the foreign key going the other way. The entity relation Faculty ($\rel_2$) has an ownership arrow to Section ($\rel_4$), which has an ownership arrow to Enrollment. So Faculty also transitively owns Enrollment.
For convenience, our notation is summarized in Table~\ref{tab:notation} in order of appearance.

\begin{table}
\centering
\caption{Table of Notation.}\label{tab:notation}
\begin{tabular}{|l@{\hspace{-2em}}r|}
\hline
\multicolumn{2}{|l|}{\textbf{Notation from Section~\ref{sec:notation}}} \\
\color{black}
$\schema$:& Pair $(\rels, \constr)$ of set of relations and integrity constraints.\\
$\rels$: & A set $\set{\rel_1, \ldots, \rel_n}$ of relations.\\
$\constr$: & Database integrity constraints (i.e., foreign key links).\\
$\rel$: & A relation with attributes $\attr(\rel)=\set{\pk,\att_1, \ldots, \att_m}$.\\
$\tbl$: & A table; concrete instance of a relation.\\
$\db:$ & A database; concrete instance of a schema.\\
$\fkarrow, \trfkarrow$: & Indicates foreign key and transitive foreign key.\\
$\ownarrow, \trownarrow$: & Indicates ownership and transitive ownership.\\
$\own(\tbl_1,\tbl_2,\record_1)$: & Records in $\tbl_2$ transitively owned by  $\record_1\in\tbl_1$. \\
\multicolumn{2}{|l|}{\textbf{Notation from Section~\ref{sec:privacy_model}}} \\
$\ent:$ & A distinguished entity relation.\\
$\ppolicy_{\schema,\ent}$: & A privacy policy for schema $\schema$ and entity $\ent$.\\
$\nbrs(\ppolicy_{\schema,\ent})$: & Set of neighboring database pairs w.r.t. policy.\\
\multicolumn{2}{|l|}{\textbf{Notation from Section~\ref{sec:sensitivity}}} \\
$\mf(\tbl.\att)$: & Frequency of most frequent value of $\tbl.\att$. \\
$\mmf(\rel.\att)$: & Computed upper bound on $\mf(\tbl.\att)$. \\
$\mown(\rel_1,\rel_2)$: & Maximum number of records $\rel_1$ can own in $\rel_2$.\\
$\Delta_{\ppolicy_{\schema,\ent}}(Q)$: & Global sensitivity of query $Q$ w.r.t. policy. \\
$\hat\Delta_{\ppolicy_{\schema,\ent}}(\query)$: & Inferred upper bound on global sensitivity. \\
\hline
\end{tabular}
\end{table}

\subsection{Differential Privacy, Counterfactual Worlds, and Plausible Deniability}
The plausible deniability model of differential privacy is often viewed through the lens
of counterfactual reasoning \cite{tschantz2020sok}. Any possible database $\db$ that 
satisfies the integrity constraints and is consistent with public knowledge is part of
a \emph{hypothetical world} $\world$. For each entity $x_i$ (e.g., a student or a faculty),
there is at least one \emph{counterfactual} version of this world $\world^{(-x_i)}$ whose corresponding
database $\db^{(-x_i)}$ satisfies the integrity constraints and is consistent with the 
same public knowledge, but from which private information about $x_i$ has been scrubbed. Thus
$\db^{(-x_i)}$ is the privacy-preserving baseline for $x_i$ and differential privacy tries
to ensure that an attacker's inference about $x_i$ in the world $\world$ is nearly the
same as in the world $\world^{(-x_i)}$.

Each pair $(\db,\db^{(-x_i)})$ and $(\db^{(-x_i)},\db)$ of hypothetical $\db$ and counterfactual $\db^{(-x_i)}$ databases is called a
pair of neighbors. Properly defining neighboring databases is subtle and challenging, as we explain
in Section \ref{sec:motivation}. Though once a collection of neighboring pairs has been specified, differential
privacy can be defined:

\begin{definition}[Differential Privacy]
\label{def:dp_with_neighbors}
Given a collection $\nbrs$ of neighboring database pairs and a privacy parameter $\epsilon\geq 0$,
    a mechanism $\mech:\dom(\schema)\longrightarrow\Omega$ is $\epsilon$-differentially private if for every set of outputs $O\subseteq\Omega$ and every $(\db,\db')\in\nbrs$:
    \begin{displaymath}
        e^{-\epsilon}Pr[\mathcal{M}(\mathcal{D}')\in O]\leq
        Pr[\mech(\db)\in O]\leq e^\epsilon Pr[\mech(\db')\in O]
    \end{displaymath}
\end{definition}

\section{Motivation}
\label{sec:motivation}
To motivate the challenges in defining counterfactual worlds, and to see where existing DP SQL frameworks fall short, let us return to the university schema of Figure \ref{fig:university}. Suppose this fictitious university has decided on the following policy, stated informally as:
\begin{enumerate}[leftmargin=*]
    \item Faculty name from relation $\rel_2$ is public.
    \item All other faculty information (demographics, salary) should be private but salary should have stronger protections.
    \item The review scores from relation $\rel_5$ that students assign to faculty should be public, but the association (i.e., which student assigned which review score) is private.
    \item The number of scholarships from relation $\rel_3$ is public, but the award amount and the who the recipients are is private.
    \item The sections $\rel_4$ relation is public, but information about which students are in which section is private. Thus the section id (sid) column in $\rel_5$ is public.
    \item All other information about students is private.
\end{enumerate}

We next consider appropriate counterfactual worlds and then analyze the shortcomings of the privacy models used in prior work.
\subsubsection*{Counterfactuals for Faculty.}
 $\rel_2$ is the entity relation for faculty and faculty have ownership of records in $\rel_4$ (sections they teach) and $\rel_5$ (since they assign grades). Because of the public information (name column of $\rel_2$, all columns of $\rel_4$, and review column of $\rel_5$), one cannot form a privacy-preserving counterfactual world that deletes any records owned or transitively owned by a faculty member. The best that can be done is to replace the contents of demographics and salary in $\rel_2$ with different values, and replace the grades in $\rel_5$. This gives multiple counterfactual worlds for each faculty member, where each world corresponds to a different setting of the modifiable attributes in records owned by a faculty.

\subsubsection*{Extra Protections for Salary.}
To provide extra protections for a faculty salary, one can  create additional counterfactual worlds for each faculty  by (counter-intuitively) altering just one faculty salary. The extra protection comes from requiring a smaller $\epsilon$ to be used for these neighbors. Thus, this is a relatively simple situation that only requires fine-grained column-based reasoning from a DP SQL system (however, this is not supported by prior work).

\subsubsection*{Counterfactuals for Students.}
The entity relation for students is $\rel_1$ and it owns the enrollments relation $\rel_5$. A counterfactual world for a specific student would drop that student from the students table $\tbl_1$ but would only be able to reassign their grade and student id (uid) for all sections the student is enrolled in (i.e., enrollment records cannot be dropped, but parts of them can be modified).

\subsection{Limitations in Prior Models}
PINQ \cite{PINQ} and Flex \cite{FLEX} were the earliest forays into DP SQL systems and so had relatively simple privacy models. 
These models inherently cannot reason about foreign keys and so cannot support these use-cases. They only support counterfactual worlds in which exactly 1 table differs by 1 record. Thus their counterfactual worlds cannot serve as privacy-preserving baselines and generally they will under-protect query answers---they add just enough noise to mask the existence of one record, rather than enough noise to mask all information owned by an entity.

The pioneering work of Wilson et al. \cite{wilson2019differentially} tracks the ownership of records through the query processing pipeline but it cannot support databases where entities interact and records can be owned by multiple entities (like the Enrollment relation $\rel_5$ in Figure \ref{fig:university}), and cannot support fine-grained reasoning about columns.

The most sophisticated DP SQL model was proposed by Kotsogiannis et al. \cite{kotsogiannis2019privatesql} and it can handle both foreign keys and multiple types of interacting entities. However, it cannot perform fine-grained reasoning about columns or about tables whose sizes cannot change in counterfactual worlds (like the Faculty table). It also cannot reason about associations (e.g., hiding associations between students and their reviews). So for some queries, it can unnecessarily add too much noise, and for others it can add too little noise. 

Consider the query \texttt{SELECT s.amount, AVG(e.grade) FROM Scholarship s, Enrollment e WHERE s.uid = e.uid GROUP BY s.amount}. This query calculates the average grade of students for different scholarship amounts. To protect a student, \privatesql~\cite{kotsogiannis2019privatesql} can only consider counterfactual worlds where all records with the same uid (across Student, Scholarship, and Enrollment) are dropped. This would cause 1 student to drop out of the join and \privatesql adds just enough noise to cover such a change. However, such counterfactual worlds are not consistent with public knowledge. Instead, a consistent counterfactual world requires the foreign keys from Scholarship and Enrollment that point to the dropped student to be changed to someone else. Thus, a student's result would drop out of the join (affecting one group in the group-by), but the changed foreign keys in the Scholarship and Enrollment tables could each link to a new person, therefore affecting many other groups. Thus, to make a world indistinguishable from its counterfactual world, much more noise must be added. Hence, \textbf{it is an issue of correctness}---if the privacy model cannot be customized, some queries will receive too much noise while others will receive too little noise.

\section{High-level Privacy Specifications}\label{sec:privacy_model}
In our proposed framework, \toolname, a data administrator specifies a set of privacy policies for for each \emph{entity relation},
such as students and faculty, each associated with a level of privacy strength (i.e., a  $\epsilon$ parameter for pure differential privacy). A query would be evaluated against
each entity relation and conservatively add the largest noise variance among all to ensure sufficient privacy protection for all entities. Hence, for the rest of this paper, we assume a distinguished entity relation $\ent\in\rels$ and show how to specify and enforce a privacy policy for $\ent$.

We start with a high-level, ``privacy labels'' framework  that a data administrator can use to specify a policy. Then in Section \ref{sec:sensitivity}, we propose a lower-level ``plausible deniability action'' framework that is more suitable for automated reasoning---we show how to translate the labeling policy into a set of plausible deniability actions and how to track query stability and sensitivity so that the right amount of noise can be added.

\subsection{High-Level Privacy Labels}

Each policy is associated with a distinguished entity relation $\ent\in\rels$ from the schema $\schema$ and assigns a label to each relation. Hence we denote a policy as $\ppolicy_{\schema,\ent}=\{P_1,\cdots,P_n\}$, where $P_i$ is the label for $\rel_i$. For convenience, we simply write $\ppolicy_{\schema,\ent}(\rel_i)=P_i$, and may omit the $\schema$ in the notation (i.e., $\ppolicy_{\ent}$) when clear from context. There are 3 choices of labels for a relation $\rel$ and its associated table $\tbl$:

\begin{itemize}[leftmargin=*]
    \item $\pdel$: Everything in the associated table $\tbl\subset\dom(\rel)$ (including its size) is private, and the existence of records is protected. The counterfactual worlds are created by deleting records. 
    \item $\prep^{\atts}$: The values of all attributes in $\atts$ are deemed private, but the \emph{existence} of a record in $\tbl$ is not protected. Hence, the table size is public (i.e., not protected by this policy) and attributes not in $\atts$ are also public. 
    Counterfactual worlds are created by modifying the values of the attributes in $\atts$. 
    \item $\ppub$: The entire relation is  public.  
\end{itemize}

Intuitively, the difference between $\pdel$ and $\prep^{\atts}$ policies lies in what “protecting an individual” means. $\pdel$ protects the existence of an entity in the dataset, while $\prep^{\atts}$ only hides attributes belonging to entities, not their presence. Hence, in a single-relation setting, the $\pdel$ policy is the same as using DP with unbounded neighbors, and $\prep^{\atts}$ is akin to bounded neighbors.

For example, if the number of students is private, we can give the relation the $\pdel$ label to protect the existence of students. If the number of students is public but their names are private, we can give the relation the $\prep^{\{\text{name}\}}$ label. If all student information is public (i.e., names, majors), we can give the relation the $\ppub$ label. Note that whenever an attribute $R.A$ is public, the size of $R$ is public as a consequence.

\subsection{Neighboring Databases}
\label{sec:neighbor}

The definition of neighboring databases can be constructed from the privacy labels. We use the intuition about counterfactual world properties, explained above in the context of individual relation labels, to define counterfactual
worlds that are consistent with ownership, public information and integrity constraints.

Given the schema $\schema=(\rels,\constr)$, database $\db=\{\tbl_1,\cdots,\tbl_n\}\in\dom(\schema)$, distinguished entity relation $\ent$,  its associated table $\tbl_\ent\in \db$, and privacy policy $\ppolicy_{\schema,\ent}=\{P_1,\cdots,P_n\}$, we next formally define the meaning of the privacy policy, in terms of what it means for $\db'=\{\tbl'_1,\cdots,\tbl'_n\}$ to be a neighbor of $\db$ under the policy. Intuitively the following definition requires that: (1) $\tbl_i$ and $\tbl'_i$ only differ in the records owned by a single entity $e\in\tbl_\ent$, (2) the difference in those records obeys policy $P_i$, and (3) all integrity constraints $\constr$ are satisfied. Recalling that $\own(\tbl_\ent,\tbl_i,e)$ is the set of records in $\tbl_i$ transitively owned by  entity record $e\in\tbl_\ent$, we define neighbors and DP for the policy as follows:

\begin{definition}[Neighboring Databases]
    Let $\schema=(\rels,\constr)$ be a relational schema and $\ppolicy_{\schema,\ent}=\{P_1,\cdots,P_n\}$ a privacy policy. We denote by $\nbrs(\ppolicy_{\schema,\ent})$ the set of pairs of databases $\db=\{\tbl_1,\cdots,\tbl_n\}$ and $\db'=\{\tbl'_1,\cdots,\tbl'_n\}$ such that $\forall(\db,\db')\in\nbrs(\ppolicy_{\schema,\ent})$:
    \begin{itemize}[leftmargin=*]
        \item $D$ and $\db'$ satisfy $\constr$, and

        \item $\tbl_\ent$ and $\tbl'_\ent$ differ by exactly 1 record $\entrec \in \tbl_\ent$, and

        
        \item if $O_i=\own(\tbl_\ent,\tbl_i,e)$, $\overline{O}_i=\tbl_i\setminus O_i$ and $P_i$ is...
        \begin{itemize}
           \item $\pdel$, then $T_i\bigcap T_i' = \overline{O}_i$ and $\tbl'_i \setminus \tbl_i=\emptyset$.
            
            \item $\prep^{\atts}$, then $|\tbl_i|=|\tbl_i'|$, $T_i\bigcap T_i' = \overline{O}_i$, and $\forall\record\in O_i.~\exists \record'\in \tbl'_i$ such that $\record$ only differs from $\record'$ by the values of attributes in $\atts$.

            \item $\ppub$, then $T_i'=T_i$.
        \end{itemize}
    \end{itemize}
\label{def:neighbor}
\end{definition}

Note that by instantiating the neighboring relation $\nbrs$ in Definition~\ref{def:dp_with_neighbors} with Definition~\ref{def:neighbor}, we have the formal differential privacy definition with \toolname.

\begin{definition}[DP for Privacy Policy]\label{def:policy_dp}
   Let $\schema$ be a schema, $\ppolicy_{\schema,\ent}$ be a privacy policy, and $\epsilon$ a privacy budget. A mechanism $\mech:\dom(\schema)\longrightarrow\Omega$ is $(\ppolicy_{\schema,\ent},\epsilon)$-differentially private if for every set of outputs $O\subseteq\Omega$ and $\forall(\db,\db')\in\nbrs(\ppolicy_{\schema,\ent})$:
   \begin{align*}
      e^{-\epsilon}Pr[\mech(\db')\in O]\leq Pr[\mech(\db)\in O]\leq e^\epsilon Pr[\mech(\db')\in O]
   \end{align*}
\end{definition}

\subsection{Checking Privacy Labels}\label{sec:typecheckissue}

The flexibility of our approach also makes it possible for a data administrator to create nonsensical labels.
Thus, when we translate privacy labels into the lower-level plausible deniability actions in Section \ref{sec:sensitivity}, we also include consistency checks (Section \ref{sec:typingrulesbase}). To help motivate
the need for those checks, we give an example of a nonsensical policy here.

Consider a schema with an entity relation $\rel_1(\textbf{eid}, A)$ having primary key \textbf{eid} and attribute $A$, and a non-entity relation $\rel_2(\textbf{id}, \text{eid}, B)$ with a primary key $\textbf{id}$, foreign key eid into $\rel_1$ and an attribute $B$. Suppose the labels are $\pdel$ for $\rel_1$ and $\ppub$ for $\rel_2$. This signals the intention to protect the existence of entities (records) in $\rel_1$. However, the public foreign key in $\rel_2$ already reveals existence of entities and simply deleting records from a table instance $\tbl_1\subset\dom(\rel_1)$ could result in an inconsistent database. The inference system would reject such a policy. Instead, the data administrator would have two choices: (1) either choose the label $\prep^{\{A\}}$ for $\rel_1$  to protect the attribute $A$ but not record existence or (2) choose the label $\prep^{\{\text{eid}\}}$ for $\rel_2$ to protect the foreign key from $\rel_2$ to $\rel_1$, but leave the rest of the attributes in $\rel_2$ public.

\section{Low-Level Privacy Specifications: The Plausible Deniability Actions}\label{sec:sensitivity}

To enforce DP for a privacy policy $\ppolicy_{\schema,\ent}$, a fundamental challenge is to analyze the \emph{sensitivity} of a given SQL query: to what extent can its output vary when evaluated on two neighboring databases? 
For example, consider the following query to the database in Figure~\ref{fig:university}: \emph{What is the total number of enrollments?} The sensitivity is the greatest change to the answer of this query between any two neighboring databases. For simplicity, consider the unbounded neighbors case, where all neighbors result from removing a single student from the database. In the worst case, the sensitivity is the maximum number of sections a student can be enrolled in by university policy. 

The \emph{stability} of a relational operator (distinct from, but closely related to sensitivity) is the greatest number of changed records that may result from applying the relational operator to a pair of neighboring databases. The distinction is that stability is a property of functions that output relations (e.g., relational operators), whereas sensitivity is a property of functions that output numbers (e.g., count queries). 
Historically, starting with PINQ~\cite{PINQ}, stability is defined as the number of records in the output of a relational operator that are affected by one record in its input. However, due to its more expressive privacy model, \textsc{DP4SQL} also needs to track which set of attributes is changed. Hence,  we  propose and use  a more general type of stability analysis called \emph{plausible deniability actions}, which denote the greatest change that may result from applying a relational operator to a pair of neighboring databases. 

We first introduce \emph{plausible deniability actions}, which help to reason about the effect of relational operators on neighboring databases (Section~\ref{sec:pda}), and then develop an inference system, formalized as a set of derivation rules, to automatically derive plausible deniability actions for SQL queries (Section~\ref{sec:inference_base}). 
Finally, we transform the actions to an upper bound on the sensitivity (Section~\ref{sec:sens_inference}), which determines how much noise must be added to the query answer.

\subsection{Plausible Deniability Actions}
\label{sec:pda}

For multi-relational databases, PrivateSQL ~\cite{kotsogiannis2019privatesql} defined the \emph{global sensitivity of a view} as the maximum number of distinct rows that may differ between a query's outputs when executed on any pair of neighboring databases. However, due to the flexible model of \toolname, we need to track more information (e.g., which set of attributes might change) during the stability analysis.

\toolname proposes the following set of 3 \emph{plausible deniability actions}. Informally, they describe the operations to be performed on a table $\tbl$ or view in one database that can transform it into a table or view in a neighboring database, as specified by the data administrator's privacy labels.

\begin{enumerate}
    \item $\aadd{a}{\rel}$: Add $a$ rows.
    
    \item $\adel{d}{\rel}$: Delete $d$ rows.
    
    \item $\arep{k}{\atts}$: Replace values of attributes $\atts$ in $k$ rows.
\end{enumerate}

Note that we do not explicitly introduce a no-op action for public relations as this can be modeled with $\arep{0}{\emptyset}$. Moreover, $\aadd{a}{}$ and $\adel{d}{}$ are always written together as the product action $\aadd{a}{}\times\adel{d}{}$, as some queries may require both the addition and deletion of records when joins are involved.

\subsection{Maximum Frequency}
\label{sec:mmf}

To soundly approximate all possible changes between two neighboring tables or views generated by SQL queries, we follow prior work \cite{FLEX, kotsogiannis2019privatesql} by assuming a static upper bound on the maximum frequency of each attribute in the schema. Such upper bounds are even used in work on truncation operators (e.g., \cite{dong2022r2t}) that try to reduce global sensitivity. In the case of tables/attributes that are public in all privacy policies (e.g., globally public), their maximum frequencies can be taken from the  data. 

Let $\rel$ be a relation, let $\tbl\subset\dom(\rel)$ be an associated table, and let $\att\in\attr(\rel)$ be an attribute. The (instance-specific) \emph{maximum frequency} $\mf(\tbl.\att)$ is the frequency of the most frequent value of $\att$ in table $\tbl$. We also assume an (instance-independent) upper bound on the maximum frequency among \emph{all possible tables}, which is denoted $\mmf(\rel.\att)$. 

\subsection{Maximum Ownership}
To derive plausible deniability actions, we need to compute an upper bound on the number of records that are owned by an entity $\entrec$. This is called the \emph{maximum ownership} of $\entrec$. Generally, if $\tbl_1,\tbl_2\in\db$, the maximum number of records owned in $\tbl_2$ by any record $\record\in\tbl_1$ can be statically computed as follows:

\begin{definition}[Maximum Ownership]\label{def:max_ownership}
    Let $\schema=(\rels,\constr)$ where $\rel_1,\rel_2\in\rels$. The maximum number of records a record $\record$ in an instance of $\rel_1$ can own in an instance of $\rel_2$ is defined as follows:
    \begin{align*}
        \mown(\rel_1,\rel_2)=
        \begin{cases}
            1 & \text{if $\rel_1=\rel_2$} \\
            \sum\limits_{\rel_2.\fk_i\fkarrow\rel_j.\pk} \mmf(\rel_2.\fk_i)\cdot\mown(\rel_1,\rel_j) & \text{otherwise}
        \end{cases}
    \end{align*}
\end{definition}

We show that maximum ownership is indeed an upper bound on the size of ownership with the following lemma.

\begin{lemma}[Correctness of Maximum Ownership]\label{lemma:maximum_ownership}
    Let $\schema=(\rels,\constr)$ be a schema, $\db\in\dom(\schema)$ any database, and $\rel_1,\rel_2\in\rels$ any two relations with corresponding instances $\tbl_1,\tbl_2\in\db$. Then,
    \begin{align*}
        \forall\record_1\in\tbl_1.~|\own(\tbl_1,\tbl_2,\record_1)|\leq\mown(\rel_1,\rel_2).
    \end{align*}
\end{lemma}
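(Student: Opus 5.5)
We argue by strong induction on the length of the longest foreign-key path from $\rel_2$ to $\rel_1$ in the acyclic reference graph. Equivalently, we induct over a topological order of the data ownership graph $G(\schema)$ restricted to relations reachable from $\rel_1$. Acyclicity of foreign keys guarantees this order exists. It also guarantees that the recursion in Definition~\ref{def:max_ownership} is well founded.

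\textbf{Base cases.} If $\rel_1=\rel_2$, we interpret $\own(\tbl_1,\tbl_1,\record_1)$ as containing at most $\record_1$ itself; $\record_1\trownarrow\record_1$ is impossible by acyclicity anyway. So $|\own|\leq 1=\mown(\rel_1,\rel_1)$. If $\rel_2$ has no foreign key path to $\rel_1$, then no record of $\tbl_2$ transitively refers to $\record_1$. Hence $\own(\tbl_1,\tbl_2,\record_1)=\emptyset$, and the bound holds trivially because $\mown\geq 0$.

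\textbf{Inductive step.} Take $\rel_2\neq\rel_1$ and $\record_2\in\own(\tbl_1,\tbl_2,\record_1)$. By the definition of transitive referral, $\record_2\trfkarrow\record_1$. Consider the first hop of this chain: there is some foreign key $\rel_2.\fk_i\fkarrow\rel_j.\pk$ and a record $\record_j\in\tbl_j$ with $\record_2.\fk_i=\record_j.\pk$. Either $\record_j=\record_1$ (so $\rel_j=\rel_1$) or $\record_j\trfkarrow\record_1$, i.e.\ $\record_j\in\own(\tbl_1,\tbl_j,\record_1)$. This gives the covering
\begin{align*}
\own(\tbl_1,\tbl_2,\record_1)\subseteq\bigcup_{\rel_2.\fk_i\fkarrow\rel_j.\pk}\ \bigcup_{\record_j\in \own^{=}(\tbl_1,\tbl_j,\record_1)}\{\record_2\in\tbl_2\mid \record_2.\fk_i=\record_j.\pk\},
\end{align*}
where $\own^{=}$ denotes $\own$ together with $\record_1$ itself when $\rel_j=\rel_1$. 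This matches the convention $\mown(\rel_1,\rel_1)=1$.

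\textbf{Counting.} For fixed $i$ and $\record_j$, the inner set has size at most $\mf(\tbl_2.\fk_i)\leq\mmf(\rel_2.\fk_i)$, since $\mmf$ bounds the frequency of any value over all possible tables. The number of admissible $\record_j$ is at most $\mown(\rel_1,\rel_j)$ by the induction hypothesis, because $\rel_j$ is strictly closer to $\rel_1$. A union bound over the finitely many foreign keys of $\rel_2$ then yields exactly the sum in Definition~\ref{def:max_ownership}.

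\textbf{Main obstacle.} The delicate point is bookkeeping rather than estimation. First, the convention $\mown(\rel_1,\rel_1)=1$ must correspond to counting $\record_1$ itself as the unique target of a direct reference; we need to check that the first-hop decomposition covers both direct and indirect referral. Second, multiple paths may lead to the same $\record_2$, but this only causes overcounting, which the union bound tolerates. Third, the induction measure must be chosen so that each $\rel_j$ referenced by $\rel_2$ really is smaller; the longest-path-to-$\rel_1$ measure works for this.
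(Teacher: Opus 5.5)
Your proof is correct and takes essentially the same route as the paper's: strong induction over the acyclic foreign-key structure, splitting the records of $\tbl_2$ owned by $\record_1$ by the first hop $\rel_2.\fk_i\fkarrow\rel_j.\pk$, and bounding each part by $\mmf(\rel_2.\fk_i)\cdot\mown(\rel_1,\rel_j)$. You are more explicit than the paper about the well-founded measure, the empty case for relations that cannot reach $\rel_1$, and the self-ownership convention behind $\mown(\rel_1,\rel_1)=1$, but the argument is the same.
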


\subsection{Action Inference for Base Relations}\label{sec:typingrulesbase}
\label{sec:inference_base}

\begin{figure}
\framebox{\textbf{Inference Rules for a Distinguished Entity Relation}}
\centering
\begin{mathpar}
\inferrule*[right=(\textsc{E-Del})]
    {\ppolicy_{\ent}(\ent)=\pdel}
    {\vdash\ent: \aadd{0}{\emptyset} \times \adel{1}{\rel}}

\inferrule*[right=(\textsc{E-Pub})]
    {\ppolicy_{\ent}(\ent)=\ppub
    }
    {\vdash\ent: \arep{0}{\emptyset}}

\inferrule*[right=(\textsc{E-Rep})]
    {\ppolicy_{\ent}(\ent)=\prep^{\atts} \\ \atts\subset\attr(\ent) \\ \ent.\pk\notin\atts}
    {\vdash\ent: \arep{1}{\atts}}
\end{mathpar}
\framebox{\textbf{Inference Rules for Non-Entity Relations}}
\begin{mathpar}
\inferrule*[right=(\textsc{F-Del})]
    {\ppolicy_{\ent}(\rel)=\pdel
    }
    {\vdash\rel: \aadd{0}{} \times \adel{\mown(\ent,\rel)}{}}

\inferrule*[right=(\textsc{F-Pub})]
    {\ppolicy_{\ent}(\rel)=\ppub \\
    \forall i.~\rel\fkarrow\rel_i.~\neg\privatt(\rel_i.\pk)
    }
    {\vdash\rel: \arep{0}{\emptyset}}

\inferrule*[right=(\textsc{F-Rep})]
    {\ppolicy_{\ent}(\rel)=\prep^{\atts} \\ \atts\subset\attr(\rel) \\ \rel.\pk\notin\atts \\\\
    \forall i.~\rel\fkarrow\rel_i.~\privatt(\rel_i.\pk)\implies\rel.\fk_{\rel_i}\in\atts
    }
    {\vdash\rel: \arep{\mown(\ent,\rel)}{\atts}}
\end{mathpar}
\caption{Inference rules for base relations. $\ent$ is a distinguished entity relation and $\rel$ is any non-entity relation. The \privatt(\att) predicate is true when \att is protected by deletion or replacement under $\ppolicy_{\ent}$.}
\label{fig:base_relation_rules}
\end{figure}

We develop an inference system to automatically compute the plausible deniability actions, from privacy labels, for each base relation $\rel\in\rels$. The inference rules are shown in Figure~\ref{fig:base_relation_rules}. The rules are relatively simple; there is one rule for each privacy label for each kind of relation. Each rule states that the plausible deniability action below the horizontal line can be derived whenever all assumptions above the line are valid.

\paragraph{Distinguished Entity Relations}
For a distinguished entity relation $\ent$, plausible deniability actions are derived directly from the privacy labels themselves. If $\ppolicy_\ent(\ent)=\pdel$ (\textsc{E-Del}), then the action for $\ent$ is $\aadd{0}{}\times\adel{1}{}$, since the existence of a single entity is protected by deletion. If $\ppolicy_\ent(\ent)=\ppub$ (\textsc{E-Pub}), then the action for $\ent$ is $\arep{0}{\emptyset}$ since all attributes are considered public. If $\ppolicy_\ent(\ent)=\prep^{\atts}$ (\textsc{E-Rep}), then the action for $\ent$ is $\arep{1}{\atts}$ since the values of attributes in $\atts$ of a single entity are protected by replacement.

\paragraph{Non-Entity Relations}
For a non-entity relation $\rel\neq\ent$, deriving the plausible deniability action is slightly more complicated due to dependencies; $\rel$ may have foreign keys that refer to other relations. 

Let $\entrec$ be an entity record that requires protection. If $\ppolicy_\ent(\rel)=\pdel$ (\textsc{F-Del}), then the existence of $\entrec$ must be protected by deletion. In the worst case, $\entrec$ owns $\mown(\ent,\rel)$ records in $\rel$. Therefore, the action for $\rel$ is $\aadd{0}{}\times\adel{\mown(\ent,\rel)}{}$.

The nontrivial cases are when $\ppolicy_{\ent}(\rel)=\ppub$ or $\ppolicy_{\ent}(\rel)=\prep^{\atts}$. In the former case (\textsc{F-Pub}), the $\ppub$ policy requires that the entire relation $\rel$ is public. Therefore, all foreign keys are public, and all matching primary keys are required to be public, as required by the assumption: $\forall i.~\rel\fkarrow\rel_i.~\neg\privatt(\rel_i.\pk)$. Since the entire relation is public, the action for $\rel$ is $\arep{0}{\emptyset}$. In the latter case (\textsc{F-Rep}), the $\prep^{\atts}$ policy requires that the attributes in $\atts$, which may include foreign keys, are private. Therefore, each private primary key must be matched with a private foreign key, as required by the assumption: $\forall i.~\rel\fkarrow\rel_i.~\privatt(\rel_i.\pk)\implies\rel.\fk_{\rel_i}\in\atts$. Since $\entrec$ owns at most $\mown(\ent,\rel)$ records in $\rel$, the action for $\rel$ is $\arep{\mown(\ent,\rel)}{\atts}$. 
We note that these checks essentially ensure that foreign keys are at least as private as their corresponding primary keys. While they are not needed when $\ppolicy_{\ent}(\rel)=\pdel$ since all attributes are private by assumption, they are required for $\ppub$ and $\prep^{\atts}$ to rule out nonsensical labels discussed in Section~\ref{sec:typecheckissue}.

\subsection{SQL Queries and Relational Algebra}

Next, we extend the inference system to support an  expressive subset of SQL. Like \privatesql~\cite{kotsogiannis2019privatesql}, \toolname supports useful SQL operations such as \texttt{SELECT}, \texttt{WHERE}, \texttt{JOIN} (equijoins), and \texttt{GROUP BY} clauses, as well as intermediate counting aggregations and subqueries in the \texttt{WHERE} clause. In addition to \texttt{COUNT(*)}, 
\toolname also supports the $\texttt{SUM($\att$)}$ aggregation. 

\paragraph{Relational Algebra}

\begin{figure}
    \centering
    \begin{align*}
    \query&::=\groupby_{\atts}^{\cnt(*)}(\raexpr)\mid\groupby_{\atts}^{\fsum(\att)}(\raexpr) \\
    \raexpr&::=\rel\mid\select_\pred(\raexpr)\mid\project_\atts(\raexpr)\mid\groupby_{\atts}(\raexpr)\mid\groupby_{\atts}^{\cnt(*)}(\raexpr)\mid\raexpr_1\underset{\att_1=\att_2}{\join}\raexpr_2
\end{align*}
    \caption{Syntax of relational algebra supported by \toolname. $Q$ is a query. $\rel$ is any base relation, $\att$ is an attribute, $\atts$ is a set of attributes, and $\pred$ is a logical predicate over attributes.}
    \label{fig:ra_syntax}
\end{figure}

As standard, we use relational algebra to model the semantics of the supported SQL queries. The syntax is shown in Figure~\ref{fig:ra_syntax}. For a schema $\schema=(\rels,\constr)$, a relational algebra \emph{expression} $\raexpr$ is either a base relation $\rel\in\rels$ or a relational transformation. Supported transformations are the select ($\select$), project ($\project$), equijoin ($\underset{\att_1=\att_2}{\join}$), grouping ($\groupby_{\atts}$), and grouping with count ($\groupby^{\cnt(*)}_{\atts}$) operators. A \emph{view} $\raexpr(\db)$ is an instance of an expression $\raexpr$ for database $\db$. Lastly, a top-level query $Q$ applies an aggregate function, possibly with grouping, on an expression $\raexpr$ and is the last step in the query execution. In our language, a top-level query $Q$ takes a database as input and returns a real number for each of its output grouping bins, $Q: \mathcal{D} \rightarrow \mathbb{R}^n$.

The rest of this section details how to extend the inference system of Figure~\ref{fig:base_relation_rules} to support the aforementioned relational operators. Then, after applying the outermost aggregation, we will derive an upper bound on the global sensitivity (formally defined in Section~\ref{sec:sens}), which parametrizes the Laplace mechanism~\cite{diffpbook} used to compute the amount of noise injected into the query answer.

\subsection{Action Inference for Unary Transformations}
\label{sec:inference_trans}

We next extend the plausible deniability actions to relational transformations. Unary transformations (i.e., select, project, grouping) are detailed first, followed by join transformations. 

The rules for unary transformations are summarized in Figure~\ref{fig:unary_transformation_rules}. Action inference is sequential: the action for each transformation is dependent on the actions of the underlying expressions, starting with the base relations. Each expression can be thought of as a tree where the leaf nodes are base relations and the non-leaf nodes are transformations. The action for each node is computed in a bottom-up manner, starting from the leaves and working up to the root. The root contains the action for the entire query $Q$.

\begin{figure}
\framebox{\textbf{Inference Rules for Unary Transformations}}

\begin{mathpar}
\centering
\inferrule*[right=(\textsc{T-Sel1})]
    {\vdash\raexpr: \aadd{a}{}\times\adel{d}{}}
    {\vdash\select_\pred(\raexpr): \aadd{a}{}\times\adel{d}{}}

\inferrule*[right=(\textsc{T-Sel2})]
    {\vdash\raexpr: \arep{k}{\atts} \\\\ \pred\text{ has no }\att\in\atts}
    {\vdash\select_\pred(\raexpr): \arep{k}{\atts}}

\inferrule*[right=(\textsc{T-Sel3})]
    {\vdash\raexpr: \arep{k}{\atts} \\\\ \pred\text{ has some }\att\in\atts}
    {\vdash\select_\pred(\raexpr): \aadd{k}{}\times\adel{k}{}}

\inferrule*[right=(\textsc{T-Prj1})]
    {\vdash\raexpr: \arep{k}{\atts_1} \\\\
    \atts_1\cap\atts_2\neq\emptyset}
    {\vdash\proj_{\atts_2}(\raexpr):\arep{k}{\atts_1\cap\atts_2}}

\inferrule*[right=(\textsc{T-Prj2})]
    {\vdash\raexpr: \arep{k}{\atts_1} \\\\
    \atts_1\cap\atts_2=\emptyset}
    {\vdash\proj_{\atts_2}(\raexpr): \arep{0}{\emptyset}}

\inferrule*[right=(\textsc{T-Prj3})]
    {\vdash\raexpr: \aadd{a}{}\times\adel{d}{}}
    {\vdash\proj_\atts(\raexpr): \aadd{a}{}\times\adel{d}{}}

\inferrule*[right=(\textsc{T-Grp1})]
    {}
    {\vdash\groupby_\atts(\raexpr): \arep{0}{\emptyset}}

\inferrule*[right=(\textsc{T-Grp2})]
    {\vdash\raexpr: \aadd{a}{}\times\adel{d}{}}
    {\vdash\groupby^{\cnt(*)}_\atts(\raexpr): \arep{a+d}{\{\cnt\}}}

\inferrule*[right=(\textsc{T-Grp3})]
    {\vdash\raexpr: \arep{k}{\atts_1} \\\\
    \atts_1\cap\atts_2\neq\emptyset}
    {\vdash\groupby^{\cnt(*)}_{\atts_2}(\raexpr): \arep{2k}{\{\cnt\}}}

\inferrule*[right=(\textsc{T-Grp4})]
    {\vdash\raexpr: \arep{k}{\atts_1} \\\\
    \atts_1\cap\atts_2=\emptyset}
    {\vdash\groupby^{\cnt(*)}_{\atts_2}(\raexpr): \arep{0}{\emptyset}}
\end{mathpar}
\caption{Action inference rules for select $(\select_{\pred})$, project $(\proj_{\atts})$, grouping $(\groupby_{\atts})$, and grouping with count $(\groupby^{\cnt(*)}_{\atts})$.}
\label{fig:unary_transformation_rules}
\end{figure}

\paragraph{Select}
Let $\raexpr$ be an expression and $\select_\pred(\raexpr)$ be a selection with a predicate $\pred$. If $\raexpr$ has action $\aadd{a}{}\times\adel{d}{}$, then $\select_\pred(\raexpr)$ has the same action since selection does not change rows (\textsc{T-Sel1}). If $\raexpr$ has action $\arep{k}{\atts}$, then the action on $\select_\pred(\raexpr)$ depends on if $\pred$ conditions on any $\att\in\atts$. If it does not, then selection will not affect the number of records in the resulting neighboring tables as it always filters the same records in $\raexpr$ and its neighbor. Hence, the action on the result is still $\arep{k}{\atts}$ (\textsc{T-Sel2}). Otherwise, each replaced value of $\att$ may add, delete, or replace a record in the resulting neighboring tables. Hence, the action on $\select_\pred(\raexpr)$ is $\aadd{k}{}\times\adel{k}{}$ since replacement can be modeled as simultaneous deletion and addition (\textsc{T-Sel3}).

\paragraph{Project}
Let $\raexpr$ be an expression and $\project_\atts(\raexpr)$ be a projection of attributes $\atts$. If $\raexpr$ has action $\arep{k}{\atts_1}$, then the action on $\project_{\atts_2}(\raexpr)$ depends on if $\atts_1$ and $\atts_2$ share any attributes. If they do, then only attributes in $\atts_1\cap\atts_2$ can be replaced (\textsc{T-Prj1}). Otherwise, no attributes in $\project_{\atts_2}(\raexpr)$ are replaceable, so the projection is effectively public (\textsc{T-Prj2}). If $\raexpr$ has action $\aadd{a}{}\times\adel{d}{}$, then $\project_\atts(\raexpr)$ has the same action since projection does not change rows (\textsc{T-Prj3}).

\begin{figure*}
    \centering
    \includegraphics[width=.7\linewidth]{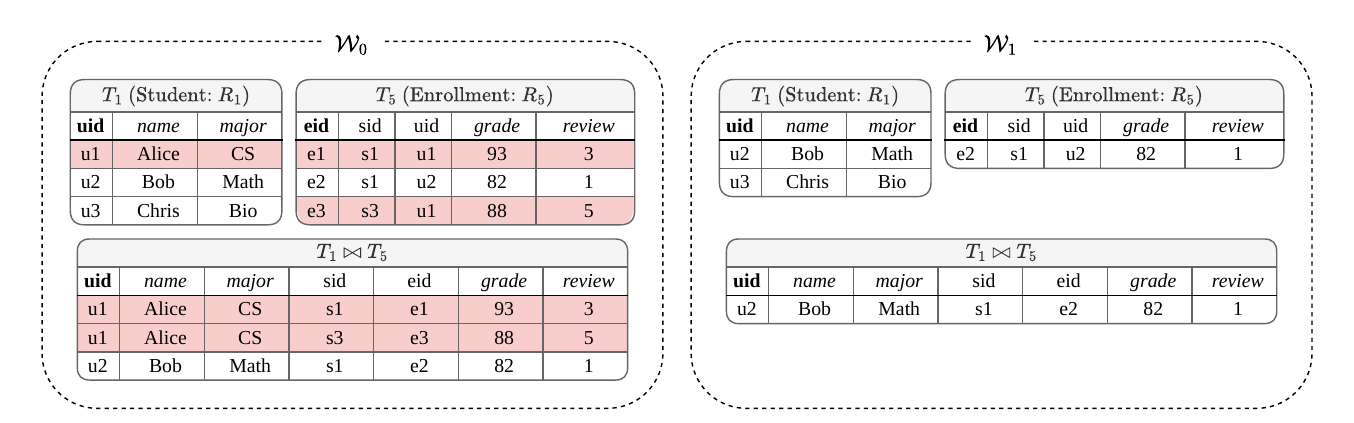}
    \caption{Records in red exist in the hypothetical world $\world_0$, but are deleted in counterfactual world  $\world_1$. In the worst case, at most 2 records are deleted from the join $\tbl_1\join\tbl_5$.}
    \label{fig:join_example}
\end{figure*}

\paragraph{Grouping with Aggregation}

A \texttt{GROUP BY} command in SQL can apply aggregate functions like \texttt{SUM} and \texttt{COUNT} to each group. Providing differential privacy in this case is subtle as group keys may leak information due to the existence/non-existence of a group. For example, Figure~\ref{fig:join_example} shows counterfactual world $\world_1$, where a single student is removed in the true world $\world_0$. Taking the grouping $\groupby_{\{\text{major}\}}(\tbl_1)$ in both worlds results in different set of groups: in $\world_0$, the CS major group exists, whereas in $\world_1$ it does not. 

Following previous work~\cite{FLEX}, \toolname preserves privacy by ensuring that the grouping bins (i.e., the set of values of the grouping attributes) are public. Specifically, \toolname ensures that the set of grouping bins is equal to the domain of the grouping attributes $\atts$. Hence, the result of grouping without aggregation, which simply returns grouping bins, is public (\textsc{T-Grp1}).

The inference system also supports grouping with count, of the form $\groupby^{\cnt(*)}_{\atts}(\raexpr)$. In addition to the public grouping bins, this operator also returns the counts within each bin, akin to a histogram. 
In all cases, the histogram has a $\arep{}{}$ action since the group keys
are public. If $\raexpr$ has action $\aadd{a}{}\times\adel{d}{}$, each added and deleted record may change the count of at most one group. Therefore, the action for $\groupby^{\cnt(*)}_{\atts_2}(\raexpr)$ is $\arep{a+d}{\{\cnt\}}$ (\textsc{T-Grp2}). If $\raexpr$ has action $\arep{k}{\atts_1}$, then the action for $\groupby^{\cnt(*)}_{\atts_2}(\raexpr)$ depends on if $\atts_1$ and $\atts_2$ share any attributes. If they do, then each replaced record may change the count of 2 groups (i.e., add to the count of one group, and reduce the count of another). Therefore, the action is $\arep{2k}{\{\cnt\}}$ (\textsc{T-Grp3}). Otherwise, the replacement cannot affect the count, so the action is $\arep{0}{\emptyset}$ (\textsc{T-Grp4}). 

Note that \textsc{T-Grp3} and \textsc{T-Grp4} also apply to aggregation without grouping when $\atts_2=\emptyset$, a special case when all records are put in one group. For example, $\groupby^{\cnt(*)}_\emptyset(\raexpr)$ returns the size of $S$.

\subsection{Action Inference for Join Transformations}

\begin{figure}
\framebox{\textbf{Inference Rules for Key Joins}}

\begin{mathpar}
\centering
\inferrule*[right=(\textsc{T-Key1})]
    {\vdash\rel: \aadd{0}{}\times\adel{\mown(\ent,\rel)}{} \\
    \rel.\fk_{\raexpr}\fkarrow\raexpr.\pk}
    {\vdash\rel\underset{\fk_{\raexpr}=\pk}{\join}\raexpr: \aadd{0}{}\times\adel{\mown(\ent,\rel)}{}}

\inferrule*[right=(\textsc{T-Key2})]
    {\vdash\rel: \arep{k}{\atts} \\
    \rel.\fk_{\raexpr}\fkarrow\raexpr.\pk \\ \privatt(\raexpr.\pk)}
    {\vdash\rel\underset{\fk_{\raexpr}=\pk}{\join}\raexpr: \arep{k}{\atts\cup\attr(\raexpr)}}

\inferrule*[right=(\textsc{T-Key3})]
    {\vdash\rel: \arep{{k}_1}{\atts_1} \\ \vdash\raexpr: \arep{{k}_2}{\atts_2} \\
    \rel.\fk_{\raexpr}\fkarrow\raexpr.\pk \\ {\neg\texttt{private}(S.PK)}}
    {\vdash\rel\underset{\fk_{\raexpr}=\pk}{\join}\raexpr: \arep{{k}_1}{\atts_1\cup\atts_2}}
\end{mathpar}
\framebox{\textbf{Inference Rules for General Joins}}

\begin{mathpar}
\centering
\inferrule*[right=(\textsc{T-Join1})]
    {\vdash\raexpr_1: \aadd{a_1}{}\times\adel{d_1}{} \\ \vdash\raexpr_2: \aadd{a_2}{}\times\adel{d_2}{}}
    {\vdash\raexpr_1\underset{\att_1=\att_2}{\join}\raexpr_2: \aadd{\substack{a_1\cdot\mmf(\raexpr_2.\att_2)+ \\ a_2\cdot\mmf(\raexpr_1.\att_1)+\\a_1\cdot a_2}}{}{}\times\adel{\substack{d_1\cdot\mmf(\raexpr_2.\att_2)+ \\ d_2\cdot\mmf(\raexpr_1.\att_1)}}{}}

\inferrule*[right=(\textsc{T-Join2})]
    {\vdash\raexpr_1: \aadd{a}{}\times\adel{d}{} \\ \vdash\raexpr_2: \arep{k}{\atts}}
    {\vdash\raexpr_1\underset{\att_1=\att_2}{\join}\raexpr_2: \aadd{\substack{a\cdot\mmf(\raexpr_2.\att_2)+ \\ {k}\cdot\mmf(\raexpr_1.\att_1)+\\a\cdot {k}}}{}{}\times\adel{\substack{d\cdot\mmf(\raexpr_2.\att_2)+ \\ {k}\cdot\mmf(\raexpr_1.\att_1)}}{}}

\inferrule*[right=(\textsc{T-Join3})]
    {\vdash\raexpr_1: \arep{{k}_1}{\atts_1} \quad \vdash\raexpr_2: \arep{{k}_2}{\atts_2} \quad 
    \att_1\notin\atts_1 \quad \att_2\notin\atts_2}
    {\vdash\raexpr_1\underset{\att_1=\att_2}{\join}\raexpr_2: \arep{\substack{{k}_1\cdot\mmf(\raexpr_2.\att_2)+ {k}_2\cdot\mmf(\raexpr_1.\att_1)}}{\atts_1\cup\atts_2}}

\inferrule*[right=(\textsc{T-Join4})]
    {\vdash\raexpr_1: \arep{{k}_1}{\atts_1} \\ \vdash\raexpr_2: \arep{{k}_2}{\atts_2} \\
    \att_1\in\atts_1}
    {\vdash\raexpr_1\underset{\att_1=\att_2}{\join}\raexpr_2: \aadd{\substack{{k}_1\cdot\mmf(\raexpr_2.\att_2)+ \\ {k}_2\cdot\mmf(\raexpr_1.\att_1)+\\{k}_1\cdot {k}_2}}{}{}\times\adel{\substack{{k}_1\cdot\mmf(\raexpr_2.\att_2)+ \\ {k}_2\cdot\mmf(\raexpr_1.\att_1)}}{}}
\end{mathpar}
\caption{Action inference rules for key joins, where the join is on a primary-foreign key pair, and general joins, where the join keys may be any attributes.}
\label{fig:join_rules}
\end{figure}

A join is either a \emph{key join}, where the join is on a primary-foreign key attribute pair, or a \emph{general join}, where the join is on any attributes.

\subsubsection{Key Joins}
A key join has the form of $\rel\underset{\fk_{\raexpr}=\pk}{\join}\raexpr$ which joins on the primary key of an \emph{expression} $\raexpr$ and a matching foreign key $\fk_{\raexpr}$ of a \emph{base relation} $\rel$. In contrast to the general join (when either the foreign key is \emph{not} from a base relation, or when the join is on arbitrary attributes, which we elaborate in Section~\ref{sec:generaljoin}), \toolname leverages the integrity constraint $\rel.\fk_{\raexpr}\fkarrow\raexpr.\pk$ to derive a tighter bound, as shown at the top of Figure~\ref{fig:join_rules}. 

Suppose $\rel$ has action $\aadd{0}{}\times\adel{\mown(\ent,\rel)}{}$. After the join, there are at most $\mown(\ent,\rel)$ records deleted (\textsc{T-Key1}) since due to the foreign key constraint $\rel.\fk_{\raexpr}\fkarrow\raexpr.\pk$, all deleted records from $\raexpr$ match with deleted records from $\rel$.
Since $\rel$ may delete more records due to other foreign keys, the number of total deleted records is upper bounded by the maximum ownership $\mown(\ent,\rel)$.

If instead $\rel$ has action $\arep{k}{\atts}$, then at most $k$ records are replaced. There are two cases: (1) $\raexpr.\pk$ is private or (2) $\raexpr.\pk$ is public. When it is private, each replaced record in $\rel$ may be matched with a different record in $\raexpr$. So, while the size of the join remains the same, the attributes in $\attr(\raexpr)$ are also replaced (\textsc{T-Key2}). When it is public, $\raexpr$ must have a $\arep{}{}$ action. In this case, the size of the join also remains the same, but only records in $\atts_1\cup\atts_2$ may be replaced (\textsc{T-Key3}).

\subsubsection{General Joins}
\label{sec:generaljoin}

The inference rules for general joins are shown at the bottom of Figure~\ref{fig:join_rules}. The maximum frequency upper bound $\mmf(\raexpr.\att)$ is derived for each attribute $\att$ of any expression $\raexpr$. Since these rules are straightforward, they are included in Appendix~\ref{sec:transformation_mmf}.

Suppose $\raexpr_1$ has action $\aadd{a_1}{}\times\adel{d_1}{}$ and $\raexpr_2$ has action $\aadd{a_2}{}\times\adel{d_2}{}$. Each record deleted from $\raexpr_1$ may match at most $\mmf(\raexpr_2.\att_2)$ records in $\raexpr_2$. Conversely, each record deleted from $\raexpr_2$ may match at most $\mmf(\raexpr_1.\att_1)$ records in $\raexpr_1$. Therefore, at most $d_1\cdot\mmf(\raexpr_2.\att_2)+d_2\cdot\mmf(\raexpr_1.\att_1)$ records may be deleted in the result of the join. Similar reasoning applies to the added records, with an additional summed $a_1\cdot a_2$ term since each pair of added records may match (\textsc{T-Join1}). Suppose $\raexpr_2$ instead has action $\arep{k}{\atts}$. Similar reasoning applies, where we conservatively treat $\arep{k}{\atts}$ as $\aadd{k}{}\times\adel{k}{}$
(\textsc{T-Join2}). Note that due to the symmetry of the join operator, \textsc{T-Join2} also applies in the case where the actions on $S_1$ and $S_2$ are swapped.

The remaining cases occur when $\raexpr_1$ has action $\arep{k_1}{\atts_1}$ and $\raexpr_2$ has action $\arep{k_2}{\atts_2}$. If neither join key is replaceable, then the size of the join remains the same. Therefore, the attributes in $\atts_1\cup\atts_2$ of at most ${k}_1\cdot\mmf(\raexpr_2.\att_2)+{k}_2\cdot\mmf(\raexpr_1.\att_1)$ records may be replaced (\textsc{T-Join3}). Otherwise, the size of the join may change, and at most the same number of records may be added and deleted (\textsc{T-Join4}).

\begin{example}[Action Calculation]
    We illustrate that for key joins, the \textsc{T-Key1} rule derives a tighter bound than the \textsc{T-Join1} rule. Consider the university database from Figure~\ref{fig:university} and suppose an upper bound on the maximum frequency (\mmf) of any foreign key is $2$. Let Student ($\rel_1$) be the distinguished entity relation with $\ppolicy_{\rel_1}(\rel_1)=\pdel$ and Enrollment ($\rel_5$) a non-entity relation with $\ppolicy_{\rel_1}(\rel_5)=\pdel$. The inference system first derives $\vdash R_1:\aadd{0}{}\times\adel{1}{}$ by \textsc{E-Del}, then $\vdash R_5:\aadd{0}{}\times\adel{2}{}$ by \textsc{F-Del}. Figure~\ref{fig:join_example} shows a counterfactual world ($\world_1$) where Alice is deleted from $\tbl_1$. Clearly, at most 2 records are deleted from $\tbl_1\join\tbl_5$ as a result. However, there are two ways to derive the action for $\rel_5\underset{\fk_{\rel_1}=\pk}{\join}\rel_1$. First, by \textsc{T-Join1} we have
    \begin{align*}
        \vdash R_5\underset{FK_{R_1}=PK}{\bowtie}R_1:&~\aadd{\substack{a_1\cdot\mmf(\rel_1.\pk)+ \\a_2\cdot\mmf(\rel_5.\fk_{\rel_1})+\\{a_1\cdot a_2}}}{}\times\adel{\substack{d_1\cdot\mmf(\rel_1.\pk)+ \\ d_2\cdot\mmf(\rel_5.\fk_{\rel_1})}}{} \\
        :&~\aadd{0\cdot1+0\cdot2+0\cdot0}{}\times\adel{2\cdot1+1\cdot2}{} \\
        :&~\aadd{0}{}\times\adel{4}{}.
    \end{align*}
    This is an overapproximation since $4>2$. Second, by \textsc{T-Key1} we have $\vdash R_5\underset{FK_{R_1}=PK}{\bowtie}R_1:\aadd{0}{}\times\adel{2}{}$, which is equal to the true stability. We highlight that this discrepancy may be exacerbated with consecutive joins.
\end{example}

\subsection{Global Sensitivity}\label{sec:sens}

The methods presented thus far infer an upper bound on the number of records that might be changed in a neighboring database when a single entity in the distinguished entity relation $\ent$ changes. 
The upper bound provides crucial information to derive the sensitivity of an SQL query.

The \emph{global sensitivity} is the maximum change in the query result on any two neighboring databases.

\begin{definition}[Global Sensitivity]
    Let $\schema$ be a schema and $\ppolicy_{\schema,\ent}$ be a privacy policy. For any query $Q$, the global sensitivity is
    \begin{displaymath}
        \Delta_{\ppolicy_{\schema,\ent}}(Q)=\max_{ (\db,\db')\in\nbrs(\ppolicy_{\schema,\ent})}||Q(\db)-Q(\db'))||_1
    \end{displaymath}
    where $||\cdot||_1$ is the $L_1$ norm\footnote{To be exact, $Q(D)$ is a view with possibly more than one column due to grouping attributes. We write $Q(D)$ here to mean only the vector of aggregated values.}.
\end{definition}

\subsection{Plausible Deniability Actions to Sensitivity}
\label{sec:sens_inference}

As is standard in most of the DP literature, we assume  
each numerical attribute has a bounded range of possible values, which is commonly linked to the attribute's datatype in a database management system. 
We write $\range(\raexpr.\att)=[L,U]$ to specify the lower and upper bounds of possible numeric values for $\raexpr.\att$, where $L\leq U$. In certain cases, the range of an attribute can be updated to a tighter bound, which can decrease the sensitivity of \fsum.
 
The sensitivity of $\cnt$ is dependent on the underlying expression's action. The sensitivity of $\fsum$ is dependent on both the underlying expression's action and the range of the attribute that is being summed. Additionally, \toolname supports the $\avg$ function. The sensitivity of $\avg$ is not computed directly. Instead, a differentially private answer is achieved by the division of a sum query and a counting query to \toolname.

\begin{figure}
\centering
\framebox{\textbf{Sensitivity Rules for Queries}}

\begin{mathpar}
\inferrule*[right=(\textsc{S-Cnt1})]
    {\vdash\raexpr: \arep{{k}}{\atts_1} \\\\
    \atts_1\cap\atts_2\neq\emptyset}
    {\hat\Delta\left(\groupby^{\cnt(*)}_{\atts_2}(\raexpr)\right)=2{k}}

\inferrule*[right=(\textsc{S-Cnt2})]
    {\vdash\raexpr: \arep{{k}}{\atts_1} \\\\
    \atts_1\cap\atts_2=\emptyset}
    {\hat\Delta\left(\groupby^{\cnt(*)}_{\atts_2}(\raexpr)\right)=0}

\inferrule*[right=(\textsc{S-Cnt3})]
    {\vdash\raexpr: \aadd{a}{}\times\adel{d}{}}
    {\hat\Delta\left(\groupby^{\cnt(*)}_\atts(\raexpr)\right)=a+d}

\inferrule*[right=(\textsc{S-Sum1})]
    {\vdash\raexpr: \arep{{k}}{\atts_1} \\\\ 
    \att\notin\atts_1 \\\\ \atts_1\cap\atts_2=\emptyset}
    {\hat\Delta\left(\groupby^{\fsum(\att)}_{\atts_2}(\raexpr)\right)=0}

\inferrule*[right=(\textsc{S-Sum2})]
    {\vdash\raexpr: \arep{{k}}{\atts_1} \\
    \range(\raexpr.\att)=[L,U] \\
    \att\in\atts_1 \\ \atts_1\cap\atts_2=\emptyset}
    {\hat\Delta\left(\groupby^{\fsum(\att)}_{\atts_2}(\raexpr)\right)={k}\cdot|L-U|}

\inferrule*[right=(\textsc{S-Sum3})]
    {\vdash\raexpr: \arep{{k}}{\atts_1} \\
    \range(\raexpr.\att)=[L,U] \\
    \atts_1\cap\atts_2\neq\emptyset}
    {\hat\Delta\left(\groupby^{\fsum(\att)}_{\atts_2}(\raexpr)\right)=2{k}\cdot\max(|L|,|U|)}

\inferrule*[right=(\textsc{S-Sum4})]
    {\vdash\raexpr: \aadd{a}{}\times\adel{d}{} \\
    \range(\raexpr.\att)=[L,U]}
    {\hat\Delta\left(\groupby^{\fsum(\att)}_{\atts_2}(\raexpr)\right)=(a+d)\cdot\max(|L|,|U|)}
\end{mathpar}
\caption{Inference rules to calculate $\hat\Delta(Q)$, an upper bound on the global sensitivity $\Delta(Q)$. The final aggregation function may be immediately preceded by a grouping operation.}
\label{fig:sens_rules}
\end{figure}

Similar to \privatesql~\cite{kotsogiannis2019privatesql}, we derive an upper bound on the global sensitivity. Rules to derive this bound for each aggregation function are shown in Figure~\ref{fig:sens_rules}. Here, $\hat\Delta$ denotes the upper bound on the global sensitivity.

If $\raexpr$ has action $\arep{{k}}{\atts_1}$, the global sensitivity bound for \cnt of $\groupby_{\atts_2}(\raexpr)$ depends on whether $\atts_1$ and $\atts_2$ share any attributes. If they do, then each replacement at worst decrements the count of one group and increments the count of another in the neighboring view. Since each group changes by at most 1, the bound on the global sensitivity is $2{k}$ (\textsc{S-Cnt1}). Otherwise, each replacement makes no change to the group counts of the neighboring view. Therefore, the bound on the global sensitivity is 0 (\textsc{S-Cnt2}). If $\raexpr$ has action $\aadd{a}{}\times\adel{d}{}$, then at most $a$ records are added and $d$ records are deleted in the neighboring view. Therefore, the global sensitivity is at most $a+d$ (\textsc{S-Cnt3}).

If $\raexpr$ has action $\arep{{k}}{\atts_1}$, the global sensitivity bound for $\fsum(\att)$ of $\groupby_{\atts_2}(\raexpr)$ depends on whether $\atts_1$ and $\atts_2$ share any attributes. If they do not, and $\att\notin\atts_1$, then the sum does not change and the bound is 0 (\textsc{S-Sum1}). When $\att\in\atts_1$, then each replaced value of $\att$ at worst changes the sum by $|L-U|$, so the bound is ${k}\cdot|L-U|$ (\textsc{S-Sum2}). If they do share attributes, then replacing a group changes the sum of a group by at most $\max(|L|,|U|)$. Since two groups change, the bound is $2{k}\cdot\max(|L|,|U|)$ (\textsc{S-Sum3}). If $\raexpr$ has action $\aadd{a}{}\times\adel{d}{}$, then at worst each added or deleted record has a value of $\max(|L|,|U|)$. Therefore, the bound is $(a+d)\cdot\max(|L|,|U|)$ (\textsc{S-Sum4}).

\section{Soundness and Privacy}\label{sec:soundness}

The next step is to establish the soundness of the inference system, where the formal proof is included in Appendix~\ref{sec:soundness_proofs}. To formalize soundness, we first define an \emph{action-induced neighboring relation}.

\subsection{Action-Induced Neighbors}

A plausible deniability action $\tau$ can be thought of as being applied to a view $\view$ by changing  at most the number of specified records in $\tau$. This results in the set of action-induced neighbors.

\begin{definition}[Action-Induced Neighbor]\label{def:action-induced}
    Let $\view\subset\dom(\raexpr)$ be an instance of expression $\raexpr$ and $\tau$ be an action. The set of action-induced neighbors of $\view$ by $\tau$ is the set of views $\action(\view,\tau)$ such that $\forall\view'\in\action(\view,\tau)$:
    \begin{itemize}
        \item If $\tau=\aadd{a}{}\times\adel{d}{}$, then
        \begin{itemize}
            \item $|\view'\setminus\view|\leq a$ and $|\view\setminus\view'|\leq d$.
        \end{itemize}
        \item If $\tau=\arep{{k}}{\atts}$, then
        \begin{itemize}
            \item $|\view|=|\view'|$ and
            \item $O=\view\setminus\view'$ and $O'=\view'\setminus\view$ such that $|O|=|O'|\leq {k}$ and $\forall\record\in O.~\exists \record'\in O'$ such that $\record$ only differs from $\record'$ by the values of attributes in $\atts$.
        \end{itemize}
    \end{itemize}
\end{definition}

For example, $\action(\view,\aadd{2}{}\times\adel{3}{})$ is the set of all views that add at most 2 records to $\view$ and delete at most 3 records from $\view$. The set of action-induced neighbors of a view $\view$ is used to show that the inference system is sound.

\subsection{Soundness}

We first prove that for any expression $\raexpr$ and any pair of neighboring databases $(\db,\db')$, $\raexpr(\db)$ and $\raexpr(\db')$ are action-induced neighbors.

\begin{theorem}[Action Soundness]\label{thm:action_soundness}
    Let $\schema$ be a schema and $\ppolicy_{\schema,\ent}$ be a privacy policy. For any expression $\raexpr$ and action $\tau$ such that $\vdash S:\tau$,
    \begin{displaymath}
        (\db,\db')\in\nbrs(\ppolicy_{\schema,\ent})\implies\raexpr(\db')\in\action(\raexpr(\db),\tau).
    \end{displaymath}
\end{theorem}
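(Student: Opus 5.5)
We prove Theorem~\ref{thm:action_soundness} by structural induction on the derivation of $\vdash\raexpr:\tau$, equivalently on the syntax of $\raexpr$ in Figure~\ref{fig:ra_syntax}. We fix an arbitrary pair $(\db,\db')\in\nbrs(\ppolicy_{\schema,\ent})$ with distinguished entity record $\entrec$, and we carry a slightly strengthened invariant through the induction. Besides $\raexpr(\db')\in\action(\raexpr(\db),\tau)$, the invariant records which records are changed: for an $\arep{k}{\atts}$ action, the changed records are paired so that each pair differs only on $\atts$, and attributes outside $\atts$ coincide. This pairing is needed by the later select, join and group rules, which inspect the join key or predicate attributes.

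\textbf{Base relations.} For \textsc{E-Del}, \textsc{E-Pub} and \textsc{E-Rep}, Definition~\ref{def:neighbor} gives the conclusion directly: $\tbl_\ent$ and $\tbl'_\ent$ differ in exactly the record $\entrec$. Under $\pdel$ that record is deleted; under $\prep^{\atts}$ it is replaced on $\atts$, with $\pk\notin\atts$, so sizes match; under $\ppub$ nothing changes. For the non-entity rules, the changed records are exactly $O_i=\own(\tbl_\ent,\tbl_i,\entrec)$, and by Lemma~\ref{lemma:maximum_ownership} we have $|O_i|\le\mown(\ent,\rel_i)$.
\begin{itemize}
\item \textsc{F-Del}: $\tbl'_i\subseteq\tbl_i$ and $\tbl_i\setminus\tbl'_i\subseteq O_i$.
\item \textsc{F-Rep}: the third bullet of Definition~\ref{def:neighbor} supplies the pairing of $O_i$ with records of $\tbl'_i\setminus\tbl_i$ differing only on $\atts$, and $|\tbl_i|=|\tbl'_i|$ gives equal sizes.
\item \textsc{F-Pub}: immediate.
\end{itemize}
The side conditions on foreign keys are not needed for soundness here. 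They only guarantee that the rule applies, and they are reused later for key joins.

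\textbf{Unary operators.} For each rule of Figure~\ref{fig:unary_transformation_rules} we take the inductive witness for $\raexpr$ and push it through the operator.
\begin{itemize}
\item Selection and projection are applied recordwise, so added and deleted records map to at most as many added and deleted records (\textsc{T-Sel1}, \textsc{T-Prj3}). Projection under set semantics can only merge records, which never increases these counts.
\item In \textsc{T-Sel2}, a replaced pair agrees on the predicate's attributes, so both records pass or both fail.
\item In \textsc{T-Sel3}, each pair contributes at most one deletion and one addition.
\item For \textsc{T-Prj1} and \textsc{T-Prj2}, the projected pair differs only on $\atts_1\cap\atts_2$.
\item For grouping, bins are the fixed public domain, so \textsc{T-Grp1} is trivial. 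Each added or deleted record moves one count by $1$ (\textsc{T-Grp2}). Each replaced pair moves at most two counts (\textsc{T-Grp3}), or none when the grouping attributes are untouched (\textsc{T-Grp4}).

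The output changes are then recorded as a replacement on the single attribute $\cnt$, keyed by the group values. One subtlety is that \textsc{T-Grp2} and \textsc{T-Grp3} count changed \emph{rows}; we interpret a count row changing by several units as one replaced row, which only makes the bound looser.
\end{itemize}

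\textbf{Joins.} This is the main obstacle, especially the key-join rules, which need more than the inductive hypothesis on the two inputs.
\begin{itemize}
\item For general joins (\textsc{T-Join1}--\textsc{T-Join4}), we decompose $\raexpr_1(\db')\join\raexpr_2(\db')$ against $\raexpr_1(\db)\join\raexpr_2(\db)$. An output tuple is deleted only if one of its components was deleted or replaced, and each such component matches at most $\mmf$ partners on the other side. This relies on soundness of the $\mmf$ bounds for derived expressions (Appendix~\ref{sec:transformation_mmf}), applied to both $\raexpr_j(\db)$ and $\raexpr_j(\db')$. Added tuples arise from an added component paired with an old record, or from two added components, which yields the $a_1a_2$ term. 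In \textsc{T-Join3} the join keys are unchanged, so each pair of input records matches in $\db$ exactly when it matches in $\db'$. This gives a bijection of output tuples that differ only on $\atts_1\cup\atts_2$, with the stated count.
\item For \textsc{T-Key1}, we must show every deleted join tuple comes from a deleted record of $\rel$. This uses the referential constraint in both databases, together with the fact that the deleted $\raexpr$ records are themselves owned by $\entrec$: any $\rel$ record referencing them is transitively owned by $\entrec$, so it is also deleted by \textsc{F-Del}. Hence the deleted join tuples are at most $\mown(\ent,\rel)$.
\item For \textsc{T-Key2}, each of the at most $k$ replaced $\rel$ records joins with exactly one $\raexpr$ record in each database, because of the foreign-key integrity constraint in $\db$ and in $\db'$. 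Every other record keeps its unique partner, provided that partner is unchanged or changed only on replaced attributes.
\item For \textsc{T-Key3}, the public primary key together with the \textsc{F-Rep} side condition ensures the foreign key does not change. Partners are therefore preserved, and only $\atts_1\cup\atts_2$ can differ.
\end{itemize}
Stating the key-join step cleanly will require strengthening the invariant to carry the ownership-based characterization of changed records whenever $\raexpr$ is built from base relations. That bookkeeping is where we expect the bulk of the work.
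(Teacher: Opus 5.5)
Your route is the paper's: rule-by-rule induction on the derivation, with base cases read off Definition~\ref{def:neighbor}. The select, project, group and general-join cases go through essentially as you describe, provided you use bag semantics throughout. Under the set semantics you invoke for \textsc{T-Prj3}, \textsc{T-Prj1} fails: two rows of $\raexpr(\db)$ can merge under projection while their replaced counterparts in $\raexpr(\db')$ stay distinct, so $|\view|$ changes.

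The genuine gap is \textsc{T-Key3}, and no strengthened invariant can close it, because the conclusion is false. Two things go wrong in your argument.
\begin{enumerate}
\item You claim that a public $\raexpr.\pk$ plus the \textsc{F-Rep} side condition keeps the foreign key fixed. That side condition forces $\rel.\fk_{\raexpr}\in\atts_1$ when $\raexpr.\pk$ is private, but never forbids it otherwise. A reassigned foreign key changes the entire $\attr(\raexpr)$ part of a join row, not just $\atts_1\cup\atts_2$.
\item More seriously, nothing bounds the number of changed join rows by $k_1$. Rows whose $\rel$-side is untouched but whose $\raexpr$-side was replaced are never counted.
\end{enumerate}
For a concrete failure, take the paper's motivating policy: Faculty as $\ent$ labeled $\prep^{\{\text{salary},\text{age}\}}$, and Section labeled $\ppub$ (allowed by \textsc{F-Pub}, since Faculty.$\pk$ is public). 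Then \textsc{T-Key3} derives $\arep{0}{\{\text{salary},\text{age}\}}$ for Section $\underset{\text{fid}=\pk}{\join}$ Faculty. Yet changing one salary changes up to $\mmf(\rel_4.\text{fid})=3$ rows, and \textsc{S-Sum2} then certifies sensitivity $0$ for a sum of salaries.

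The paper's proof passes over this with the same unjustified sentence (``all changed records from $\raexpr$ match with replaced records from $\rel$''). That sentence yields the bound $k_1$ only when $\rel$'s action came from \textsc{F-Rep}, so that $k_1$ counts every record owned by $\entrec$. A sound rule needs something like $k_1+k_2\cdot\mmf(\rel.\fk_{\raexpr})$ rows, and must add $\attr(\raexpr)$ when $\rel.\fk_{\raexpr}\in\atts_1$.

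\textsc{T-Key1} and \textsc{T-Key2} are sound when $\raexpr$ is a base relation, but your arguments rely on facts that hold only then. The paper's proof has the same hole, since Definition~\ref{def:neighbor} constrains only base tables.
\begin{itemize}
\item In \textsc{T-Key2}, foreign-key integrity guarantees a partner in $\raexpr(\db')$ only for a base table. Take Student labeled $\pdel$, Scholarship labeled $\prep^{\{\text{uid},\text{amount}\}}$, and $\raexpr=\select_{\text{major}=\text{CS}}(\text{Student})$. Reassigning a deleted CS student's scholarships to a non-CS student removes join rows. The size invariant of $\arep{}{}$ fails, and \textsc{S-Cnt2} gives the count sensitivity $0$.
\item In \textsc{T-Key1} you treat only deleted $\raexpr$ rows and implicitly assume each $\rel$ row has a single partner. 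The ownership-based characterization you plan to carry through is not preserved by general joins: a row whose $\raexpr_1$-component is not owned by $\entrec$ can appear or vanish because its partner's join attribute was replaced. Nor is it preserved by joins that duplicate $\raexpr.\pk$, after which one deleted $\rel$ row removes many join rows.
\end{itemize}
So the invariant you defer does not exist at the generality you need. The key-join rules must be restricted to base (or unfiltered, key-preserving) $\raexpr$, with other cases routed through \textsc{T-Join1}--\textsc{T-Join4}. \textsc{T-Key3} must also be repaired.
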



The sensitivity soundness theorem states that $\hat\Delta$ is an upper bound on the global sensitivity $\Delta$. The theorem assumes that the action environment is sound by Theorem~\ref{thm:action_soundness}.

\begin{theorem}[Sensitivity Soundness]\label{thm:sens_soundness}
    Let $\schema$ be a schema and $\ppolicy_{\schema,\ent}$ be a privacy policy. For any query $Q$,
    \begin{equation*}
        \Delta_{\ppolicy_{\schema,\ent}}(\query)\leq\hat\Delta_{\ppolicy_{\schema,\ent}}(\query).
    \end{equation*}
\end{theorem}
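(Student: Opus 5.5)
The argument is a case analysis over the sensitivity rules of Figure~\ref{fig:sens_rules}, using Theorem~\ref{thm:action_soundness} as the only bridge between the inferred action and the actual neighbors. Fix a query $\query=\groupby^{f}_{\atts_2}(\raexpr)$ with $f\in\{\cnt(*),\fsum(\att)\}$, a derivation $\vdash\raexpr:\tau$, and an arbitrary pair $(\db,\db')\in\nbrs(\ppolicy_{\schema,\ent})$. Write $\view=\raexpr(\db)$ and $\view'=\raexpr(\db')$. By Theorem~\ref{thm:action_soundness}, $\view'\in\action(\view,\tau)$. It therefore suffices to show that, for every $\view'\in\action(\view,\tau)$, the $L_1$ distance between the aggregate vectors of $\groupby^f_{\atts_2}(\view)$ and $\groupby^f_{\atts_2}(\view')$ is at most the value $\hat\Delta$ given by the applicable rule. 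Taking the maximum over neighbors then yields $\Delta\le\hat\Delta$.

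The grouping bins are fixed to $\dom(\atts_2)$ (the \textsc{T-Grp1} convention), so both outputs are vectors indexed by the same bins. Each output is a sum, over rows, of a per-row contribution placed in the bin $r.\atts_2$: the contribution is $1$ for $\cnt$ and $r.\att$ for $\fsum$. By the triangle inequality, the $L_1$ change is at most the sum of the per-row effects of the rows in $\view\setminus\view'$ and $\view'\setminus\view$, since rows in $\view\cap\view'$ cancel.

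\emph{Add/Delete case (\textsc{S-Cnt3}, \textsc{S-Sum4}).} There are at most $a$ added rows and $d$ deleted rows. Each one changes exactly one bin, by $1$ for $\cnt$ and by $|r.\att|\le\max(|L|,|U|)$ for $\fsum$. This gives the bounds $a+d$ and $(a+d)\max(|L|,|U|)$.

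\emph{Replace case.} The definition of $\action$ gives a pairing between $O=\view\setminus\view'$ and $O'=\view'\setminus\view$, with $|O|=|O'|\le k$, in which each $r\in O$ differs from its partner $r'\in O'$ only on $\atts_1$. I will make this a bijection; if the definition only guarantees existence, a Hall-type argument or the equal cardinalities should give an injective choice. The sub-cases then go pair by pair.

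\begin{itemize}
\item If $\atts_1\cap\atts_2=\emptyset$, then $r$ and $r'$ fall in the same bin. For $\cnt$ the contributions cancel, giving $0$ (\textsc{S-Cnt2}). For $\fsum$ with $\att\notin\atts_1$, the value $r.\att=r'.\att$, giving $0$ (\textsc{S-Sum1}). For $\fsum$ with $\att\in\atts_1$, the bin changes by $|r.\att-r'.\att|\le U-L$, giving $k|L-U|$ in total (\textsc{S-Sum2}).
\item If $\atts_1\cap\atts_2\ne\emptyset$, then $r$ and $r'$ may fall in two different bins. Each pair changes at most two bins, by at most $1$ for $\cnt$ and at most $\max(|L|,|U|)$ for $\fsum$. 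This gives $2k$ (\textsc{S-Cnt1}) and $2k\max(|L|,|U|)$ (\textsc{S-Sum3}).
\end{itemize}

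One subtlety is that views may carry multiset semantics after a projection, while Definition~\ref{def:action-induced} uses set difference. I will treat views as multisets with row identities (or bags) throughout, so the counts above make sense.

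The main obstacle is the replace-case pairing. The definition only asserts that each $r\in O$ has \emph{some} partner differing on $\atts_1$, and if several rows share a partner the per-pair accounting breaks. I would first try to use $|O|=|O'|$ together with the construction in the proof of Theorem~\ref{thm:action_soundness}, which produces an explicit bijection. If needed, I would strengthen the induction hypothesis of that theorem to carry the bijection explicitly. Everything else is a routine triangle-inequality calculation.
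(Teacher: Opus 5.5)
Your route is the paper's: a case analysis over the rules of Figure~\ref{fig:sens_rules}, resting on Theorem~\ref{thm:action_soundness}. Your per-rule arithmetic is correct, including the triangle-inequality bookkeeping that the paper omits. The pairing obstacle you flag is real, however, and your first remedy cannot remove it. Definition~\ref{def:action-induced} only gives each $\record\in O$ \emph{some} partner in $O'$ that agrees with it outside $\atts$. Together with $|O|=|O'|$ this does not yield a bijection, and Hall's condition can fail. Take a view over attributes $x,a$ with $\atts_1=\{a\}$ and $\atts_2=\{x\}$, and let $\view=\{(1,1),(1,2)\}$ and $\view'=\{(1,3),(2,1)\}$. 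Both rows of $O=\view$ have the single partner $(1,3)$, so $\view'\in\action(\view,\arep{2}{\{a\}})$. Yet the counts grouped by $x$ go from $(2,0)$ to $(1,1)$, an $L_1$ change of $2$ where \textsc{S-Cnt2} claims $0$. So your reduction ``it suffices to bound every $\view'\in\action(\view,\tau)$'' is false for \textsc{S-Cnt2}, and analogous examples break \textsc{S-Sum1} and \textsc{S-Sum2}. The paper's proof has the same hole: it simply asserts that the counts are unchanged.

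Only your second remedy works, and it lives outside this theorem. Strengthen the $\arep{k}{\atts}$ clause to require a bijection $O\to O'$ whose partners agree outside $\atts$. Equivalently, require $\proj_{\attr(\raexpr)\setminus\atts}(\view)=\proj_{\attr(\raexpr)\setminus\atts}(\view')$ as bags, with $|O|\le k$. Then re-prove Theorem~\ref{thm:action_soundness} for this stronger invariant. You cannot read the bijection off the paper's proof of that theorem, which constructs none, and Definition~\ref{def:neighbor} has the same $\forall\exists$ form.

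At base relations you recover the bijection from primary-key uniqueness. \textsc{E-Rep} and \textsc{F-Rep} force $\pk\notin\atts$, so a replaced record's partner carries its key. Distinct replaced records therefore get distinct partners, all lying in $O'$, and $|O|=|O'|$ makes this a bijection. You must then propagate the invariant through every rule that outputs a $\arep{}{}$ action:
\begin{itemize}
\item \textsc{T-Sel2};
\item \textsc{T-Prj1} and \textsc{T-Prj2}, under bag semantics;
\item the grouping rules, with bins acting as keys;
\item \textsc{T-Key2}, \textsc{T-Key3} and \textsc{T-Join3}.
\end{itemize}
The paper's own \textsc{T-Sel2}, \textsc{T-Prj2} and \textsc{T-Grp4} cases silently rely on this invariant too. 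With it in hand, your per-pair argument goes through verbatim.

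Finally, \textsc{S-Cnt1} and \textsc{S-Sum3} need no pairing at all. Charging each row of $O\cup O'$ separately, with $|O|+|O'|\le 2k$, already gives $2k$ and $2k\max(|L|,|U|)$. The stronger invariant is therefore needed only for the three rules with $\atts_1\cap\atts_2=\emptyset$.
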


The following soundness theorem is a direct consequence of Theorem~\ref{thm:sens_soundness} and the Laplace mechanism.

\begin{theorem}\label{thm:soundness}
    Let $\db$ be a database with schema $\schema$, $\query: \dbs \rightarrow \mathbb{R}^{n}$ a SQL query, and $\ppolicy_{\schema,\ent}$ a privacy policy. By returning $\query(\db)+(\delta_1,\dots, \delta_{n})$, where $\delta_1,\dots,\delta_{n}$ are i.i.d. random variables drawn from $ \operatorname{Lap}\!\left(\frac{\hat\Delta_{\ppolicy_{\schema,\ent}}(Q)}{\epsilon}\right)$, \toolname satisfies $(\ppolicy_{\schema,\ent},\epsilon)$-differential privacy.
\end{theorem}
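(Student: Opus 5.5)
The plan is to reduce Theorem~\ref{thm:soundness} to the standard Laplace mechanism argument. Theorem~\ref{thm:sens_soundness} supplies the sensitivity bound, and Definition~\ref{def:policy_dp} tells us which pairs of databases must be indistinguishable. Throughout, fix a query $\query$ and write $\hat\Delta=\hat\Delta_{\ppolicy_{\schema,\ent}}(\query)$ and $b=\hat\Delta/\epsilon$.

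\textbf{Step 1: the output coordinates are aligned.} The output vector $\query(\db)\in\mathbb{R}^n$ must be indexed by the same bins for every database. This follows from the grouping convention: the grouping bins are the full domain of the grouping attributes, which is public. Hence for any $(\db,\db')\in\nbrs(\ppolicy_{\schema,\ent})$, the vectors $\query(\db)$ and $\query(\db')$ have the same length $n$ and matching coordinates. Their difference is therefore meaningful, and by the definition of global sensitivity together with Theorem~\ref{thm:sens_soundness},
\[
\|\query(\db)-\query(\db')\|_1\le\Delta_{\ppolicy_{\schema,\ent}}(\query)\le\hat\Delta .
\]

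\textbf{Step 2: the density-ratio bound.} Assume first that $\hat\Delta>0$. The mechanism's output has density
\[
p_\db(z)=\prod_{i=1}^n \frac{1}{2b}\exp\!\left(-\frac{|z_i-\query(\db)_i|}{b}\right).
\]
By the triangle inequality applied in each coordinate,
\[
\frac{p_\db(z)}{p_{\db'}(z)}\le \exp\!\left(\frac{\|\query(\db)-\query(\db')\|_1}{b}\right)\le e^{\epsilon}.
\]
The same argument gives the bound in the reverse direction. Integrating the pointwise inequality over any measurable set $O$ yields $e^{-\epsilon}\Pr[\mech(\db')\in O]\le\Pr[\mech(\db)\in O]\le e^{\epsilon}\Pr[\mech(\db')\in O]$. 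Since this holds for every pair in $\nbrs(\ppolicy_{\schema,\ent})$, this is exactly $(\ppolicy_{\schema,\ent},\epsilon)$-differential privacy in the sense of Definition~\ref{def:policy_dp}.

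\textbf{Step 3: the degenerate case $\hat\Delta=0$.} This case needs separate treatment because $\operatorname{Lap}(0)$ is a point mass and has no density. Here $\Delta\le\hat\Delta=0$ by Theorem~\ref{thm:sens_soundness}, so $\query(\db)=\query(\db')$ for all neighbors. The output distributions are then identical, and the required inequalities hold trivially.

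\textbf{Main obstacle.} The only step I expect to need care is Step 1, not the calculation in Step 2. I must justify that the vectors produced on neighboring databases share a common index set, so that Theorem~\ref{thm:sens_soundness}, which is phrased via the $L_1$ norm of the aggregated values, applies coordinatewise. I would cite the paper's convention that grouping bins equal $\dom(\atts)$ (rule \textsc{T-Grp1} and the surrounding discussion). I would then note that $n$ depends only on the query and the schema, not on $\db$.
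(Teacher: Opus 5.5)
Your proposal is correct and takes the same route as the paper, which gives no separate proof and simply calls the theorem a direct consequence of Theorem~\ref{thm:sens_soundness} plus the standard Laplace mechanism guarantee. Your checks that output bins are aligned (via the public-grouping-bins convention) and your separate handling of the degenerate case $\hat\Delta=0$ fill in details the paper leaves implicit.
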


\section{Case Study}\label{sec:case_study}

We begin by examining the university schema $\schema=(\rels,\constr)$ shown in Figure~\ref{fig:university} as a case study to illustrate the advantages of supporting flexible privacy policies in \toolname. Specifically, we consider a flexible privacy policy, denoted $\ppolicy^1_{\rel_1}$. This policy is motivated by realistic privacy requirements commonly found in university information systems. In the whole case study, we choose Student ($\rel_1$) as the distinguished entity relation $\ent$. As such, we omit the subscript in the notation for clarity hereafter.

For $\ppolicy^1$, the entirety of the Student relation ($\rel_1$) is assumed to be private (i.e., $\pdel$). The number of Faculty ($\rel_2$) is public, though demographic and financial information is private (i.e., $\prep^{\{\text{salary,age}\}}$). The total number of scholarships ($\rel_3$) awarded to students is public, but who they are awarded to and the amount given is private (i.e., $\prep^{\{\text{uid,amount}\}}$). Information in the course catalog, including the Section relation ($\rel_4$), is considered entirely public (i.e., $\ppub$). The size of the Enrollment relation is also public, reflecting the fact that students can observe enrollment counts through the course registration system; however, the grade associated with each enrollment is private. 
$\ppolicy^1$ assumes uid to be private, protecting the association between students and their reviews while keeping the review scores themselves public (i.e., $\prep^{\{\text{uid,grade}\}}$).

As no existing DP SQL system can support the flexible privacy policy as stated above, we consider two baseline policies $\ppolicy^U$ and $\ppolicy^B$, that best match the capabilities of prior work, for comparison. $\ppolicy^U$ models pure unbounded neighbors as in \privatesql~\cite{kotsogiannis2019privatesql}, where the amount of noise to add to a query is computed by considering the effects of just dropping all records owned by a student (since prior work does not support consistency with public information). This policy has the $\pdel$ privacy label for every relation $\rel\in\rels$. $\ppolicy^B$ models pure bounded neighbors in which the amount of noise is computed by considering arbitrary alterations to all records owned by a student. This policy has the $\prep^{\attr(\rel)}$ privacy label for every relation $\rel\in\rels$. Next, we study the following research questions based on several representative SQL queries:
\begin{enumerate}
    \item[\textbf{RQ1}] \textbf{Privacy Improvement.} When do existing techniques that can only support baseline policies underprotect the data (i.e., inject \emph{less} noise than necessary), and how does \toolname overcome this limitation?

    \item[\textbf{RQ2}] \textbf{Utility Improvement.} When do existing techniques that can only support baseline policies overprotect the data (i.e., inject \emph{more} noise than necessary), and how does \toolname overcome this limitation?
\end{enumerate}

\subsection{Setup}

For any relation $\rel$, the primary key $\rel.\pk$ is unique by definition ($\mmf(\rel.\pk)=1$). In the case study, we assume that faculty teach at most 3 sections ($\mmf(\rel_4.\text{fid})=3$). Each section has at most 20 student enrollments ($\mmf(\rel_5.\text{sid})=20$). Each student is enrolled in at most 6 courses ($\mmf(\rel_5.\text{uid})=6$) and can receive at most 2 scholarships ($\mmf(\rel_3.\text{uid})=2$). Lastly, a review is an integer rating where $\range(\rel_5.\text{review})=[1,5]$.

The privacy budget $\epsilon$ captures the tradeoff between privacy and utility (Definition~\ref{def:policy_dp}). We set $\epsilon=1$ for each SQL query.

\subsection{University Queries}

We consider the following SQL queries in this case study (the exact queries can be found in Appendix~\ref{sec:case_study_queries}):
\begin{itemize}
    \item[$Q1$] \emph{How many sections with less than 10 students is each student enrolled in?}

    \item[$Q2$] \emph{How many times was each review score given in sections with more than 15 students?}

    \item[$Q3$] \emph{How many total scholarships have been awarded to full-time students (i.e., those enrolled in at least 4 sections)?}
    
    \item[$Q4$] \emph{What is the average review score given to each faculty member teaching NLP?}
\end{itemize}

These queries are representative of realistic workloads (e.g., review scores may be used for faculty promotion) and contain a mix of filtering, projection, join, and grouping operators.

\subsection{Privacy and Utility Comparison}

\begin{figure}
\centering
\includegraphics[width=0.5\textwidth]{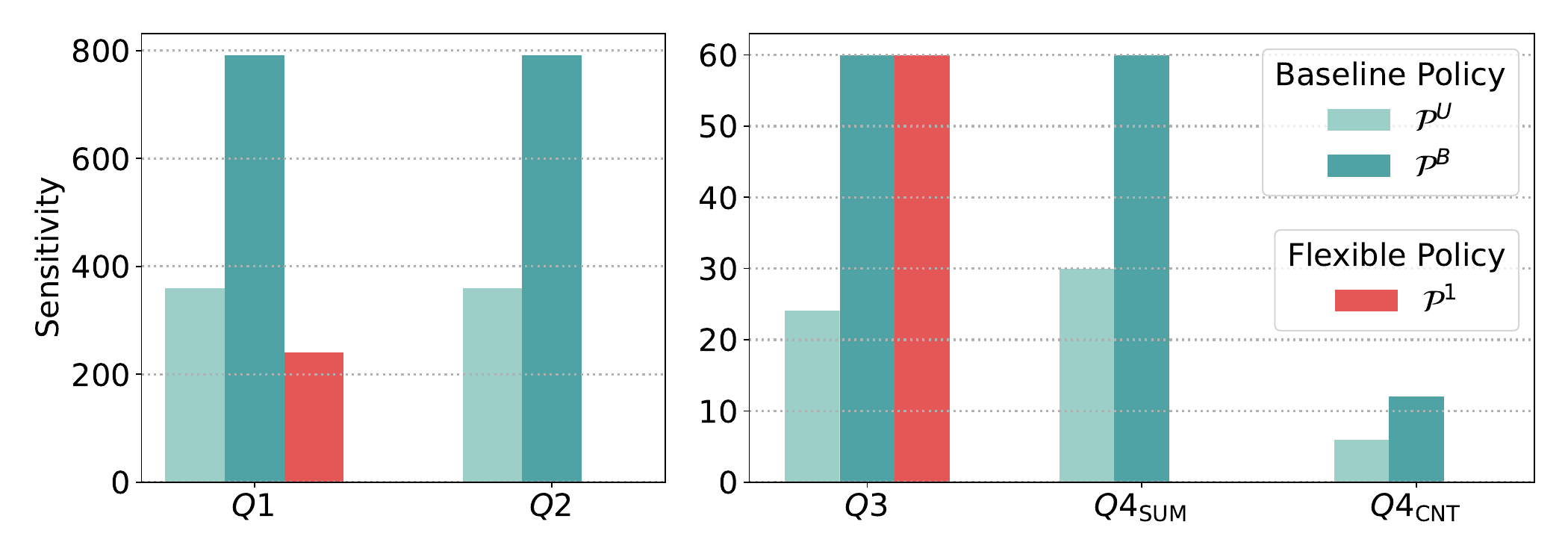}
\caption{Derived sensitivities for each privacy policy and query. Missing values denote zero sensitivity, and hatching indicates underprotection.}
\label{fig:case_study_sens}
\end{figure}

With \toolname, we calculate the sensitivity bound $\hat\Delta$ for each query. We focus on sensitivity comparison when possible as it is directly proportional to the injected noise to query answers.
Figure~\ref{fig:case_study_sens} shows the sensitivity calculated by \toolname for each query and privacy policy combination. 
We next examine each query in detail.

\paragraph{Q1}

This query contains an intermediate grouping with count operation on foreign key $\rel_5.\text{sid}$, followed by a self join on the same key. It also performs a filter on the count followed by a final grouping on $\rel_5.\text{uid}$. The baselines $\ppolicy^U$ and $\ppolicy^B$ both provide overprotection as they treat the grouping attribute $\rel_5.\text{sid}$ as private. They both derive large sensitivity due to the self-join in the query, which uses the general but loose join rule. The difference is proportional to the maximum frequency of the join key $\mmf(\rel_5.\text{sid})=20$. In comparison, $\ppolicy^1$ treats $\rel_5.\text{sid}$ as public, and thus the intermediate grouping with count is also public (see \textsc{T-Grp4}). 

\paragraph{Q2}

This query is similar to $Q1$ except the final grouping on $\rel_5.\text{review}$ instead of $\rel_5.\text{uid}$. Since $\ppolicy^1$ assumes reviews are public, \toolname 
it is able to return the answer to the query without injecting noise (i.e., a sensitivity of zero). This is intuitively correct as releasing statistical enrollment data from $Q2$ does not reveal grade and review attributes.  

\paragraph{Q3}

This simple query illustrates why pure unbounded neighbors ($\ppolicy^U$) is \emph{insufficient} to provide proper differential privacy. Consider the first step of an evaluation of $Q3$: (1) Counting the number of enrollments for each student with a grouping count on the private attribute $\rel_5.\text{uid}$. Since the number of enrollments is public, replacing the uid of a student will change the count of two groups in a counterfactual world, thus increasing the stability by a factor of 2. This is followed by the second step (2), a subsequent join with $\rel_3$ on private foreign key uid, which increases the stability proportional to $\mmf(\rel_3.\text{uid})$. Since $\ppolicy^U$ assumes at most one entity can be deleted, it under-approximates the stability in step (1); intuitively, replacing the uid of a student is the same as deleting one student record \emph{and adding another student record}, which results in a a factor of 2 rather than 1 in stability calculation. The under-approximation with policy $\ppolicy^U$ results in a failure to properly track the overall stability and leading to the incorrect sensitivity displayed in Figure~\ref{fig:case_study_sens}.

\paragraph{Q4}

This query is the most complex, containing two joins, a filter, and a final grouping.
\toolname supports $Q4$ by answering two group-by aggregate subqueries. Let $Q4_\fsum$ and $Q4_\cnt$ be subqueries identical to $Q4$ up to replacing the $\avg$ aggregation with $\fsum$ and $\cnt$ respectively. Then, answering $Q4$ amounts to the quotient $Q4_{\fsum}/Q4_{\cnt}$ of differentially private answers.

The average review score is public information, so \toolname correctly derives that releasing the query answer does not reveal any private data: the query has a sensitivity of zero. Unsurprisingly, however, both baseline policies output high sensitivities for $Q4_{\fsum}$ since they treat review scores as private. For $Q4_{\cnt}$, they both output high sensitivity since they treat all attributes in Enrollment as private.
Notably, $\ppolicy^1$ has zero sensitivity for both subqueries, and therefore would need to inject \emph{zero noise} into the true query answer. 

\subsection{Key Takeaways}

\begin{enumerate}
    \item[\textbf{RQ1}] Existing techniques that are limited to supporting only the unbounded policy $\ppolicy^U$ may introduce insufficient noise, resulting in underprotection. This occurs because they lack the expressiveness needed to account for public table sizes, as demonstrated by $Q3$. Flexible privacy policies overcome this issue with privacy labels that support a mix of public and private information, including table sizes.

    \item[\textbf{RQ2}] Existing techniques that are limited to supporting baseline policies only (e.g., $\ppolicy^U$, $\ppolicy^B$) may introduce excessive noise, resulting in overprotection. This occurs because they conservatively treat all attributes as private. In comparison, \toolname can differentiate between private and non-private attributes, thereby deriving tight sensitivity bounds on query answers. This improvement is demonstrated by the majority of the SQL queries in our case study.
\end{enumerate}

\section{TPC-H Evaluation}\label{sec:evaluation}

We evaluate \toolname on TPC-H~\cite{tpc-h}, an industry standard benchmark. The benchmark dataset consists of eight relations: \text{Region} (\textbf{R}), \text{Nation} (\textbf{N}), \text{Part} (\textbf{P}), \text{Supplier} (\textbf{S}), \text{Partsupp} (\textbf{PS}), \text{Customer} (\textbf{C}), \text{Order} (\textbf{O}), and \text{Lineitem} (\textbf{L}). These relations comprise a complex business environment (e.g., an online shopping platform), whose schema as a data ownership graph is shown in Figure~\ref{fig:tpch_schema}.

\begin{figure}
    \centering
    \includegraphics[width=\linewidth]{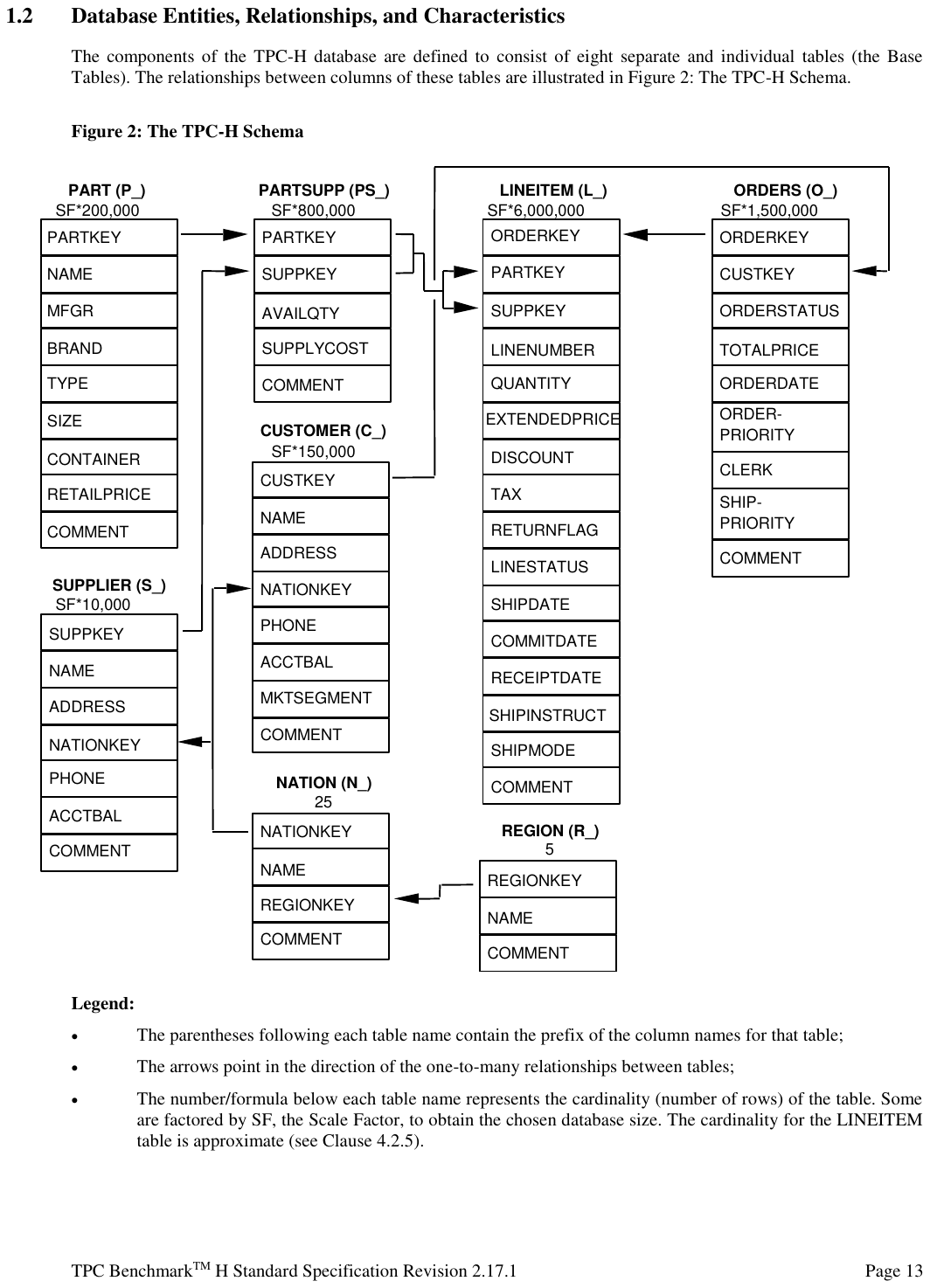}
    \caption{The TPC-H Schema. The arrows point in the direction of data ownership. The number below each table name is the cardinality output by the TPC-H data generation tool (possibly multiplied by a scale factor SF) for that table.}
    \label{fig:tpch_schema}
\end{figure}

We consider two flexible privacy policies; they choose $\textbf{C}$ and $\textbf{S}$ as the distinguished entity relation $\ent$, and are denoted as $\ppolicy_{\textbf{C}}$ and $\ppolicy_{\textbf{S}}$ respectively. In both policies, geographic data about nations (\textbf{N}) and regions (\textbf{R}) is publicly available, as well as part (\textbf{P}) information (i.e., $\ppub$). The number of suppliers (\textbf{S}) using the platform are known, but not their financial or demographic information (i.e., $\prep^{\attr(\textbf{S})}$). The number of parts each supplier sells (\textbf{PS}) is also public (i.e., $\prep^{\attr(\textbf{PS})}$). Customer (\textbf{C}) information, including the number of customers, is highly sensitive and entirely private (i.e., $\pdel$). We assume the platform releases an aggregate number of orders (\textbf{O}) placed by all customers (i.e., $\prep^{\attr(\textbf{O})}$). However, lineitem (\textbf{L}) data, which contains information about each item in an order, is entirely private (i.e., $\pdel$). 

For comparison, we consider two unbounded baseline policies (i.e., $\ppolicy^U_{\textbf{C}}$ and $\ppolicy^U_{\textbf{S}}$) and Tumult Analytics~\cite{tumultanalyticssoftware}, a Python package that builds upon the design principles of \privatesql. Tumult Analytics supports two privacy policies, which we call $\text{Tumult}_{\textbf{C}}$ and $\text{Tumult}_{\textbf{S}}$, that protect the existence of customers and suppliers respectively.

\subsection{Setup}

We generate a database using the TPC-H data generation tool with the default scale factor of 1. This results in 8.7M total records in the database. To compute the upper bounds on maximum frequency, we run the query \texttt{SELECT COUNT(key) AS count FROM table GROUP BY key ORDER BY count DESC LIMIT 1} for each foreign key. 
Since TPC-H queries only perform key joins, upper bounds for non-key attributes are not required.
We use a standard privacy budget of $\epsilon=1$ for each query.

\subsection{Comparison with Tumult Analytics}

\begin{figure}
    \centering
    \includegraphics[width=\linewidth]{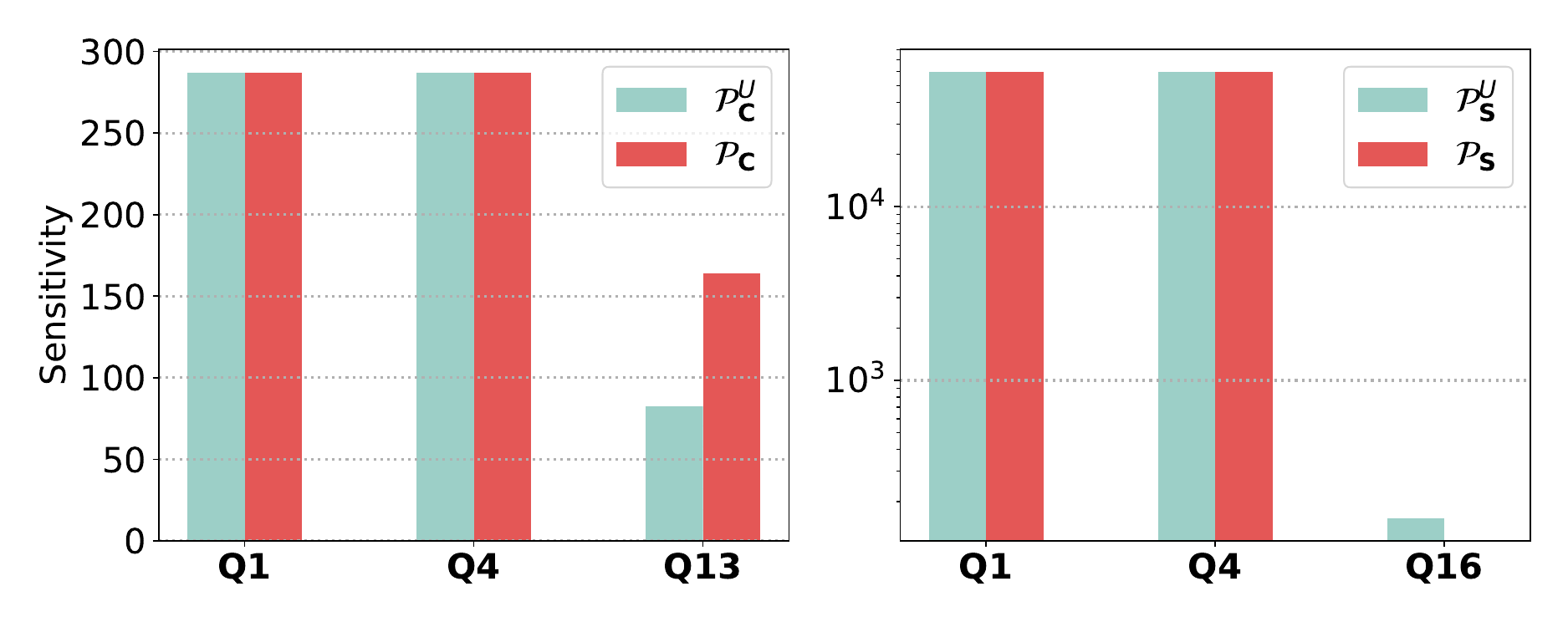}
    \caption{Sensitivities for TPC-H queries. Hatching indicates that the policy underprotects the data.}
    \label{fig:tpch_sens}
\end{figure}

\begin{figure}
    \centering
    \includegraphics[width=\linewidth]{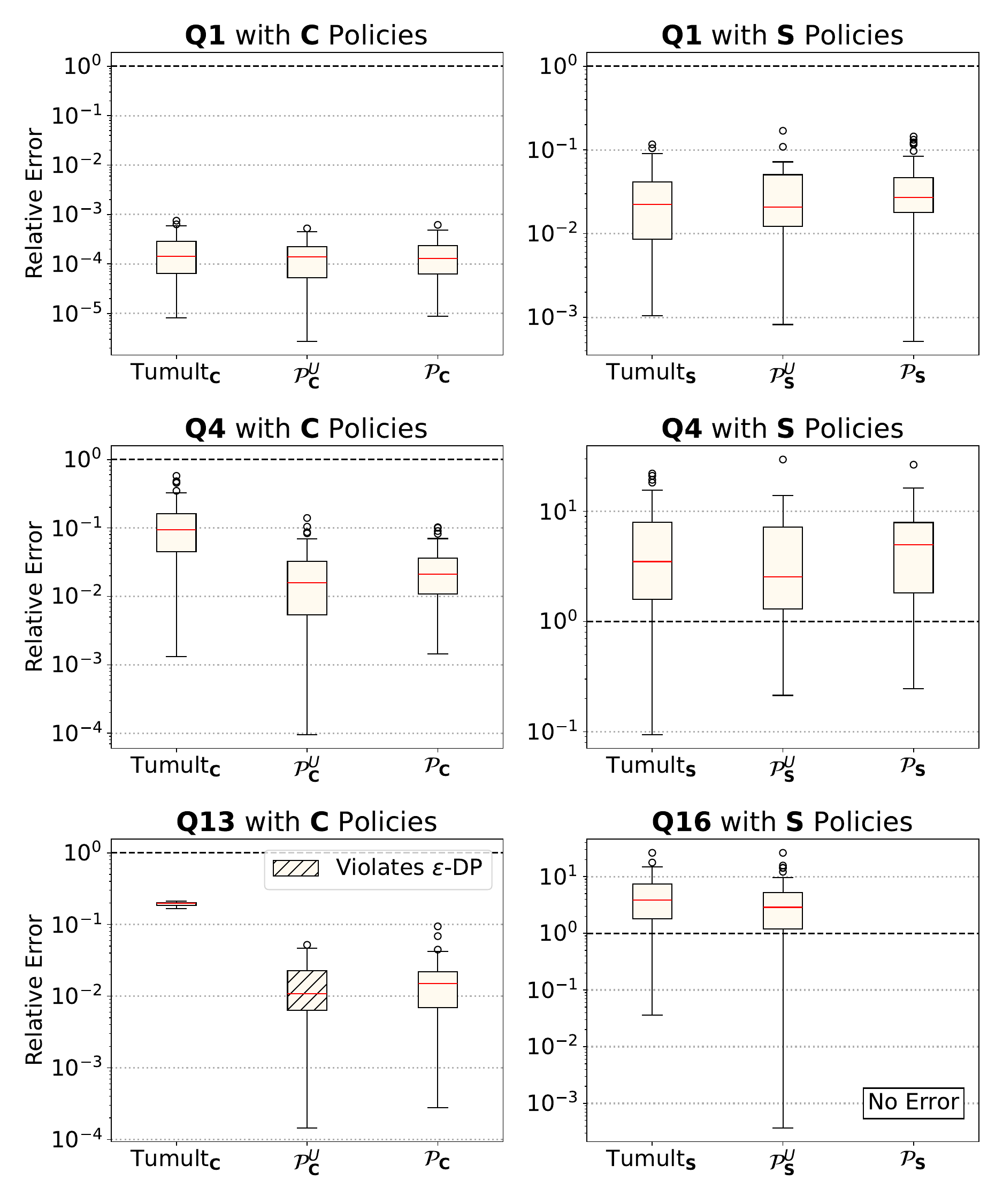}
    \caption{Relative error of \toolname and Tumult Analytics for TPC-H queries. {\textbf{C}} and {\textbf{R}} policies treat Customer and Supplier relations as the entity relations respectively. The dashed line denotes the error for a system that always outputs 0.}
    \label{fig:tpch_error}
\end{figure}

Following existing work~\cite{FLEX,kotsogiannis2019privatesql}, we evaluate on the TPC-H queries that have count aggregations: \textbf{Q1}, \textbf{Q4}, \textbf{Q13}, and \textbf{Q16}. 

\begin{itemize}
    \item [\textbf{Q1}] \emph{Pricing Summary Report}: Outputs the number of lineitems that were shipped before a given date, grouped by return status. It contains no joins.

    \item [\textbf{Q4}] \emph{Order Priority Checking}: Lists the number of orders for which at least one lineitem was received late by its customer. The counts are grouped by the priority of the order. It contains one join on its Orders and Lineitem tables.

    \item [\textbf{Q13}] \emph{Customer Distribution}: Outputs a histogram of how many customers have 1 order, 2 orders, 3 orders, etc. It contains one join on Customer and Orders, as well as an intermediate count aggregation.

    \item [\textbf{Q16}]  \emph{Parts/Supplier Relationship}: Lists the number of suppliers that can satisfy a particular set of part requirements. The counts are grouped by the attributes that make up the requirements. It has two joins: one on Part and Partsupp, followed by another on Supplier.
\end{itemize}

\paragraph{Sensitivity}

Since Tumult Analytics does not disclose the computed sensitivity, we only show the calculated sensitivity for $\ppolicy^U_{\textbf{C}}$, $\ppolicy^U_{\textbf{S}}$, $\ppolicy_{\textbf{C}}$, and $\ppolicy_{\textbf{S}}$ in Figure~\ref{fig:tpch_sens}. All policies exhibit comparable sensitivity for \textbf{Q1}, as they all assume the grouping attributes to be private. A similar observation holds for \textbf{Q4}, where all policies treat lineitems (\textbf{L}) as private. Queries \textbf{Q13} and \textbf{Q16} are more interesting. Since \textbf{Q13} does not access any supplier-owned data, it is evaluated exclusively under the \textbf{C} policies. $\ppolicy^U_{\textbf{C}}$ underprotects the data because the privacy policy incorrectly models the total number of orders on the platform as private. The reasoning is similar to that of $Q3$ in Section~\ref{sec:case_study}.
Conversely, \textbf{Q16} does not involve customer data and is thus evaluated only under the \textbf{S} policies. Since \textbf{Q16} returns public information, \toolname is able to return the query answer without injecting noise (i.e., a sensitivity of zero). In contrast, $\ppolicy^U_{\textbf{S}}$ overprotects data as it assumes the number of suppliers is private. 

\paragraph{Utility}

We now compare the utility of all privacy policies with Tumult Analytics. Given a policy $\ppolicy$ and a query $Q$, we measure utility with the relative error defined as $\texttt{RelError}_{\ppolicy}(Q)=|y-\hat y|/\max(1,y)$ where $y$ is the true answer and $\hat y$ is the noisy answer output by a DP SQL system. For each query $Q$ and each privacy policy $\ppolicy$, we sample the relative error $\texttt{RelError}_{\ppolicy}(Q)$ 50 times. The boxplots are displayed in Figure~\ref{fig:tpch_error}. 

The errors are similar across the board for \textbf{Q1} and \textbf{Q4}, which is consistent with the results from sensitivity analysis. 
$\text{Tumult}_{\textbf{C}}$ overprotects the data for \textbf{Q13} with a median error
of 19.6\%, about 12 times larger than that of $\ppolicy_{\textbf{C}}$. $\text{Tumult}_{\textbf{C}}$ exhibits minimal variability across all samples because deterministic contribution bounding clamps group-by inputs before aggregation, fixing query sensitivity and yielding identical noise distributions across samples. As before, $\ppolicy^U_{\textbf{C}}$ underprotects the data because it incorrectly models the number of platform-wide orders as private. 

Lastly, only $\ppolicy_{\textbf{S}}$ is able to return the true answer to public query \textbf{Q16}. Due to the high maximum ownership $\mown(\textbf{S},\textbf{L})$, the errors output by $\text{Tumult}_{\textbf{S}}$ and $\ppolicy^U_{\textbf{S}}$ are too high for practical use, performing worse than a model that always outputs 0. These results demonstrate that incorrect privacy modeling (e.g., incorrectly assuming the privacy levels of table sizes and/or attributes) can significantly affect noise calculations by resulting in adding too much or too little noise depending on the query.

\section{Conclusion and Future Work}

We introduce \toolname, a flexible DP SQL framework that enables data administrators to meet a mix of complicated privacy needs. Paired with a low-level plausible-deniability-action framework, \toolname reasons soundly about the stability of relational algebra underlying SQL queries and computes a sufficient amount of noise to add to query answers.

For future work, we plan to investigate a high-level declarative policy, such as one that can directly specify privacy requirements for complex views of a database (e.g., protect all grade/review interactions between any faculty/student pair). An extension to \toolname would ingest the specification and produce appropriate privacy policies. A second direction is improving system utility with utility-guided query rewriting. While our work avoids semantics-altering rewrites, some techniques, such as truncation and dropping records, can avoid undesirably large global sensitivities when used properly.

\section*{Ethical Considerations}

All data used in the evaluation of this paper is synthetically generated, publicly available, and is used for demonstration purposes only. As such, there are no stakeholders who are at risk due to the public dissemination of the results produced by \toolname in this paper. 

In a realistic scenario, data processed by \toolname may contain sensitive information owned by real stakeholders. The results produced by \toolname are $\epsilon$-differentially private. Stakeholders should make an informed decision about whether differential privacy is right for their use case before deciding to publicly release any results generated by \toolname.


\begin{acks}
This work was supported by National Science Foundation awards CNS-2317232 and CNS-2317233. 
\end{acks}

\bibliographystyle{ACM-Reference-Format}
\bibliography{references}

\clearpage

\appendix

\section{Case Study SQL Queries}\label{sec:case_study_queries}

The following SQL queries were used in Section~\ref{sec:case_study}.

\begin{enumerate}
  \item[$Q1$] \texttt{SELECT COUNT(*) FROM (SELECT e1.sid, COUNT(*) AS enroll\_count
        FROM enrollment e1
        GROUP BY e1.sid)
  INNER JOIN enrollment e2 ON e1.sid = e2.sid
  WHERE enroll\_count < 10
  GROUP BY e2.uid}
  \item[$Q2$] \texttt{SELECT COUNT(*) FROM (SELECT e1.sid, COUNT(*) AS enroll\_count
        FROM enrollment e1
        GROUP BY e1.sid)
  INNER JOIN enrollment e2 ON e1.sid = e2.sid
  WHERE enroll\_count > 15
  GROUP BY e2.review}
  \item[$Q3$] \texttt{SELECT COUNT(*) FROM (SELECT e.uid, COUNT(*) AS enroll\_count
        FROM enrollment e
        GROUP BY e.uid)
  INNER JOIN scholarship s ON e.uid = s.uid
  WHERE enroll\_count > 4}
  \item[$Q4$] \texttt{SELECT AVG(e.review)
  FROM section s
  INNER JOIN enrollment e ON s.sid = e.sid
  GROUP BY s.fid}
\end{enumerate}
\toolname separates $Q4$ into the following subqueries for sensitivity calculation.
\begin{enumerate}
  \item[$Q4_\fsum$] \texttt{SELECT SUM(e.review)
  FROM section s
  INNER JOIN enrollment e ON s.sid = e.sid
  GROUP BY s.fid}
  \item[$Q4_\cnt$] \texttt{SELECT COUNT(e.review)
  FROM section s
  INNER JOIN enrollment e ON s.sid = e.sid
  GROUP BY s.fid}
\end{enumerate}

\section{Maximum Frequency Upper Bound for Transformations}\label{sec:transformation_mmf}

Recall the maximum frequency upper bound for each attribute in each base relation $\rel\in\rels$ (Section~\ref{sec:mmf}). Calculating the plausible deniability action for $\raexpr_1\underset{\att_1=\att_2}{\join}\raexpr_2$ in the general case requires the maximum frequencies of both $\att_1$ and $\att_2$. Figure~\ref{fig:mf_update} presents the rules for soundly calculating the maximum frequency upper bound for any view produced by a transformation. The following lemma states that \mmf is indeed an upper bound on the maximum frequency for any attribute.

\begin{figure}
\begin{align*}
\mmf(\select_{\pred}(\raexpr).\att)&=\mmf(\raexpr.\att) \\
\mmf(\project_{\atts}(\raexpr).\att)&=\mmf(\raexpr.\att) \\
\mmf(\groupby_{\atts}(\raexpr).\att)&=\mmf(\raexpr.\att) \\
\mmf(\groupby^{\cnt(*)}_{\atts}(\raexpr).\att)&=
\begin{cases}
    \inf & \text{if $\att=\cnt$} \\
    \mmf(\raexpr.\att) & \text{otherwise}
\end{cases} \\
\mmf((\raexpr_1\underset{\att_1=\att_2}{\join}\raexpr_2).\att)&=
\begin{cases}
    \mmf(\raexpr_1.\att)\cdot\mmf(\raexpr_2.\att_2) & \text{if $\att\in\attr(\raexpr_1)$} \\
    \mmf(\raexpr_2.\att)\cdot\mmf(\raexpr_1.\att_1) & \text{otherwise}
\end{cases}
\end{align*}
\caption{Rules for maximum frequency upper bound.}
\label{fig:mf_update}
\end{figure}

\begin{lemma}[Correctness of \mmf]\label{lemma:mf}
    Let $\schema=(\rels,\constr)$ be a schema and $\db\in\dom(\schema)$ any database. For all relations $\rel\in\rels$ and corresponding instances $\tbl\in\db$, assume $\mf(\tbl.\att)\leq\mmf(\rel.\att)$ for all $\att\in\attr(\rel)$. Then,
        $\mf(\raexpr(\db).\att)\leq\mmf(\raexpr.\att).$
    That is, the frequency of the most frequent value of $\raexpr(\db).\att$ is at most $\mmf(\raexpr.\att)$.
\end{lemma}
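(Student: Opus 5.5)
We proceed by structural induction on the expression $\raexpr$, following the grammar of Figure~\ref{fig:ra_syntax} and the rules of Figure~\ref{fig:mf_update}. In the base case $\raexpr=\rel$ is a base relation, so $\raexpr(\db)=\tbl$ and the claim $\mf(\tbl.\att)\leq\mmf(\rel.\att)$ is exactly the hypothesis of the lemma. For the inductive step we fix an arbitrary database $\db$ and an attribute $\att$ of the output view. We assume the bound holds for every attribute of the immediate subexpressions, evaluated on the same $\db$, and do a case split on the top-level operator.

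For the unary operators the argument is a counting one. For $\select_\pred(\raexpr)$, the selected view is a sub-multiset of $\raexpr(\db)$, so every value occurs in $\select_\pred(\raexpr)(\db).\att$ at most as often as in $\raexpr(\db).\att$. Projection $\project_{\atts}$ keeps one output row per input row under bag semantics (with set semantics it can only merge rows), so the multiset of $\att$-values is unchanged or shrinks. Grouping $\groupby_{\atts}$ produces one row per group, and each group contains at least one input row. The map from input rows to their groups is therefore surjective, and for a fixed value $v$ of a grouping attribute $\att$ the number of groups with $\att=v$ is at most the number of input rows with $\att=v$. The same reasoning covers the non-count attributes of $\groupby^{\cnt(*)}_{\atts}$; for $\att=\cnt$ the bound is $\infty$ and holds trivially. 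We note explicitly that output attributes of a grouping are only the grouping attributes, so $\mmf(\raexpr.\att)$ is only consulted for attributes present in $\raexpr$. One subtlety arises because the paper makes grouping bins equal to the whole domain. Empty bins would then carry count $0$, and a grouping attribute value would appear once per bin. This is still at most $\mmf(\raexpr.\att)$ provided $\mmf\geq1$, and we will remark that $\mmf\geq 1$ may be assumed without loss of generality.

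The join case carries the real content. Take $\att\in\attr(\raexpr_1)$ and a value $v$. Each output row with $\att=v$ is a pair $(r_1,r_2)$ with $r_1.\att=v$ and $r_1.\att_1=r_2.\att_2$. There are at most $\mf(\raexpr_1(\db).\att)\leq\mmf(\raexpr_1.\att)$ choices of $r_1$. For each fixed $r_1$, the matching $r_2$ all share the single value $r_1.\att_1$ in attribute $\att_2$, so there are at most $\mf(\raexpr_2(\db).\att_2)\leq\mmf(\raexpr_2.\att_2)$ of them. Multiplying gives the stated product bound. The case $\att\in\attr(\raexpr_2)$ is symmetric. The only care needed is when an attribute name appears on both sides, such as the join key itself. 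We handle this by assuming attributes are disambiguated by renaming, which is also how the rule's case split is to be read.

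We expect the main obstacle to be not the arithmetic but pinning down the semantics the rules implicitly assume. The key questions are bag versus set semantics for $\project$, the domain-wide grouping bins, and attribute naming in joins. We will fix these conventions at the start of the proof so that each case is a clean injection or counting argument.
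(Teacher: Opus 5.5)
Your structural induction matches the paper's proof. The base case follows from the hypothesis; selection, projection and grouping cannot raise any value's frequency; $\cnt$ gets an infinite bound; and the join is bounded by a product. Your pair-counting for the join is a cleaner version of the paper's one-line ``worst case'' argument: at most $\mmf(\raexpr_1.\att)$ choices of $r_1$, each matching at most $\mmf(\raexpr_2.\att_2)$ rows $r_2$ that share the single value $r_1.\att_1$. One step you explicitly claim is wrong, though: your treatment of domain-wide grouping bins.

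If $\groupby_{\atts}(\raexpr)$ returns every tuple of the domain of $\atts$, a value $v$ of a grouping attribute $\att\in\atts$ does not appear once. It appears once per combination of values of the other grouping attributes, i.e.\ $\prod_{B\in\atts,\,B\neq\att}|\dom(B)|$ times. For example, group a base relation $\rel$ on $\{\pk,C\}$ with $|\dom(C)|=2$. Every key value then appears twice, while the rule gives $\mmf(\groupby_{\{\pk,C\}}(\rel).\pk)=\mmf(\rel.\pk)=1$. So assuming $\mmf\geq 1$ only rescues $|\atts|=1$. For $|\atts|\geq 2$ the lemma is false under that convention, and the same holds for the non-count attributes of $\groupby^{\cnt(*)}_{\atts}$.

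The paper's proof avoids this by asserting that grouping ``does not add any records.'' That is, it implicitly uses the standard semantics in which only nonempty groups are output, and under that semantics your surjectivity argument is exactly right. You should therefore fix that semantics for this lemma, restrict to single-attribute grouping, or change the $\mmf$ rule for grouping attributes when bins are domain-wide. Do not claim the bound survives domain-wide bins with $\mmf\geq 1$.
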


\section{Soundness Proofs}\label{sec:soundness_proofs}

We prove the main results of Section~\ref{sec:soundness}. 

\subsection{Supporting Lemmas}

We first prove two supporting lemmas. The first is Lemma~\ref{lemma:maximum_ownership} (Correctness of Maximum Ownership):

\begin{proof}
By strong induction on foreign keys.

\emph{Base case}: Each record only owns itself in its own relation. Therefore, for all $\record_1\in\tbl_1$,
\begin{align*}
    |\own(\tbl_1,\tbl_1,\record_1)|=1\leq\mown(\rel_1,\rel_1)=1.
\end{align*}

\emph{Inductive hypothesis}: For all $\rel'$ with associated table $\tbl'\subset\dom(\rel')$ such that $\rel_2\fkarrow\rel'$, assume for all $\record_1\in\tbl_1$
\begin{align*}
    |\own(\tbl_1,\tbl',\record_1)|\leq\mown(\rel_1,\rel').
\end{align*}

\emph{Inductive step}: For each $\rel'$ such that $\rel_2\fkarrow\rel'$, $|\own(\tbl_1,\tbl',\record_1)|\leq\mown(\rel_1,\rel')$ by the inductive hypothesis. For each $\rel_2.\fk_i\fkarrow\rel'.\pk$, at most $\mown(\rel_1,\rel')\cdot\mmf(\rel_2.\fk_i)$ records are owned by $\record_1$. Therefore, for all $\record_1\in\tbl_1$,
\begin{align*}
        |\own(\tbl_1,\tbl_2,\record_1)|&\leq\sum\limits_{\rel_2.\fk_i\fkarrow\rel'.\pk} \mmf(\rel_2.\fk_i)\cdot\mown(\rel_1,\rel') \\
        &=\mown(\rel_1,\rel_2).
    \end{align*}
\end{proof}

\noindent The second is Lemma~\ref{lemma:mf} (Correctness of \mmf):

\begin{proof}\label{proof:mf}
    By induction on the structure of $\raexpr$.

    \textbf{Case} $\rel$. By assumption.

    \textbf{Case} $\select_{\pred}(\raexpr)$.
    Selection does not change any records. Therefore, the result follows from the inductive hypothesis.

    \textbf{Case} $\proj_{\atts}(\raexpr)$.
    Projection does not change any records. Therefore the result follows from the inductive hypothesis.

    \textbf{Case} $\groupby_{\atts}(\raexpr)$.
    Grouping does not add any records. Therefore, the result follows from the inductive hypothesis.

    \textbf{Case} $\groupby^{\{\cnt(*)\}}_{\atts}(\raexpr)$.
    This transformation adds a \cnt attribute for each group. The maximum frequency of \cnt is unbounded since it is at most the number of groups, which is unbounded. For the grouping attributes, the result follows from the inductive hypothesis since grouping does not add records. 

    \textbf{Case} $\raexpr_1\underset{\att_1=\att_2}{\join}\raexpr_2$.
    Suppose $\att\in\attr(\raexpr_1)$. By the inductive hypothesis, $\mmf(\raexpr_1.\att_1),\mmf(\raexpr_1.\att)$, and $\mmf(\raexpr_2.\att_2)$ are upper bounds on the maximum frequencies of $\raexpr_1.\att_1,\raexpr_1.\att$, and $\raexpr_2.\att_2$ respectfully. In the worst case, every record with the most frequent value of $\raexpr_1.\att$ matches with every record with the most frequent value of $\raexpr_2.\att_2$. This is at most $\mmf(\raexpr_1.\att)\cdot\mmf(\raexpr_2.\att_2)$ records, which is precisely the definition. The case where $\att\in\attr(\raexpr_2)$ is symmetric.
\end{proof}

\subsection{Soundness Theorems}

We now prove the two soundness theorems. The first is Theorem~\ref{thm:action_soundness} (Action Soundness):

\begin{proof}
    Assume $(\db,\db')\in\nbrs(\ppolicy_{\schema,\ent})$. We proceed by induction on the base relation inference rules in Figure~\ref{fig:base_relation_rules}: \\

    \textbf{Case} \textsc{E-Del}.
    By Definition~\ref{def:neighbor}, $\tbl_\ent$ and $\tbl'_\ent$ differ by exactly one record and $\tbl'_\ent\setminus\tbl_\ent=\emptyset$. Therefore, $\tbl_\ent'\in\action(\tbl_\ent,\aadd{0}{}\times\adel{1}{})$.

    \textbf{Case} \textsc{E-Pub}.
    By Definition~\ref{def:neighbor}, $\tbl_\ent=\tbl'_\ent$. Therefore, $\tbl_\ent'\in\action(\tbl_\ent,\arep{0}{\emptyset})$.

    \textbf{Case} \textsc{E-Rep}.
    By Definition~\ref{def:neighbor}, $\tbl_\ent$ and $\tbl'_\ent$ are the same size and differ by the values of $\atts$ in exactly one record. Therefore, $\tbl_\ent'\in\action(\tbl_\ent,\arep{1}{\atts})$.

    \textbf{Case} \textsc{F-Pub}.
    By Definition~\ref{def:neighbor}, $\tbl=\tbl'$. Therefore, $\tbl'\in\action(\tbl,\arep{0}{\emptyset})$.

    \textbf{Case} \textsc{F-Del}.
    By Definition~\ref{def:neighbor}, $\tbl'$ deletes $\own(\tbl_\ent,\tbl,e)\subseteq\tbl$. By Lemma~\ref{lemma:maximum_ownership}, this is at most $\mown(\ent,\rel)$ records. Therefore, $\tbl'\in\action(\tbl,\aadd{0}{}\times\adel{\mown(\ent,\rel)}{})$.

    \textbf{Case} \textsc{F-Rep}.
    By Definition~\ref{def:neighbor}, $\tbl'$ replaces the values in $\atts$ of $\own(\tbl_\ent,\tbl,e)\subseteq\tbl$. By Lemma~\ref{lemma:maximum_ownership}, this is at most $\mown(\ent,\rel)$ records. Since $\db'$ satisfies $\constr$, $\atts$ must contain all foreign keys that map to a private primary key. Therefore, $\tbl'\in\action(\tbl,\arep{\mown(\ent,\rel)}{\atts})$. \\

    \noindent Continuing with the unary transformation inference rules in Figure~\ref{fig:unary_transformation_rules}: \\

    \textbf{Case} \textsc{T-Sel1}.
    By the inductive hypothesis, $\vdash\raexpr:\aadd{a}{}\times\adel{d}{}$. In the worst case, added and deleted records are not filtered. Therefore, $\select_\pred(\raexpr(\db'))\in\action(\select_\pred(\raexpr(\db)),\aadd{a}{}\times\adel{d}{})$.

    \textbf{Case} \textsc{T-Sel2}.
    By the inductive hypothesis, $\vdash\raexpr:\arep{k}{\atts}$. By assumption, the predicate $\pred$ does not contain any attribute in $\atts$, so replacements do not affect filtering. Therefore, $\select_\pred(\raexpr(\db'))\in\action(\select_\pred(\raexpr(\db)),\arep{k}{\atts})$.

    \textbf{Case} \textsc{T-Sel3}.
    By the inductive hypothesis, $\vdash\raexpr:\arep{k}{\atts}$. By assumption, the predicate $\pred$ contains some attribute in $\atts$. Each replaced record results in an added, deleted, or replaced record after filtering. Therefore, $\select_\pred(\raexpr(\db'))\in\action(\select_\pred(\raexpr(\db)),\aadd{k}{}\times\adel{k}{})$.

    \textbf{Case} \textsc{T-Prj1}.
    By the inductive hypothesis, $\vdash\raexpr:\arep{k}{\atts_1}$. Projection does not change the number of records and, by assumption, $\atts_1\cap\atts_2\neq\emptyset$. Therefore, $\proj_{\atts_2}(\raexpr(\db'))\in\action(\proj_{\atts_2}(\raexpr(\db)),\arep{k}{\atts_1\cap\atts_2})$.

    \textbf{Case} \textsc{T-Prj2}.
    By the inductive hypothesis, $\vdash\raexpr:\arep{k}{\atts_1}$. By assumption, $\atts_1\cap\atts_2=\emptyset$, so the projection is completely public. Therefore, $\proj_{\atts_2}(\raexpr(\db'))\in \action(\proj_{\atts_2}(\raexpr(\db)),\arep{0}{\emptyset})$.

    \textbf{Case} \textsc{T-Prj3}.
    By the inductive hypothesis, $\vdash\raexpr:\aadd{a}{}\times\adel{d}{}$. Again, projection does not change the number of records. Therefore, $\proj_{\atts_1}(\raexpr(\db'))\in\action(\proj_{\atts_1}(\raexpr(\db)),\aadd{a}{}\times\adel{d}{})$.

    \textbf{Case} \textsc{T-Grp1}.
    By assumption, the grouping bins are public (so the size is constant). Grouping without aggregation simply returns the set of public grouping bins. Therefore, $\groupby_\atts(\raexpr(\db'))\in\action(\groupby_\atts(\raexpr(\db)),\arep{0}{\emptyset})$.

    \textbf{Case} \textsc{T-Grp2}.
    By the inductive hypothesis, $\vdash\raexpr:\aadd{a}{}\times\adel{d}{}$. By assumption, the grouping bins are public (so the size is constant). Each added record may increment a group's count, and each deleted record may decrement a group's count. Therefore, $\groupby^{\{\cnt\}}_\atts(\raexpr(\db'))\in\action(\groupby^{\{\cnt\}}_\atts(\raexpr(\db)),\arep{a+d}{\{\cnt\}})$.

    \textbf{Case} \textsc{T-Grp3}.
    By the inductive hypothesis, $\vdash\raexpr:\arep{k}{\atts_1}$. By assumption, the grouping bins are public (so the size is constant), and grouping attributes can be replaced. Each replaced record may increment one group's count and decrement another group's count. Therefore, $\groupby^{\{\cnt\}}_{\atts_2}(\raexpr(\db'))\in\action(\groupby^{\{\cnt\}}_{\atts_2}(\raexpr(\db)),\arep{2k}{\{\cnt\}})$.

    \textbf{Case} \textsc{T-Grp4}.
    By the inductive hypothesis, $\vdash\raexpr:\arep{k}{\atts_1}$. By assumption, the grouping bins are public (so the size is constant), and grouping attributes cannot be replaced, so the output is public. Therefore, $\groupby^{\{\cnt\}}_{\atts_2}(\raexpr(\db'))\in\action(\groupby^{\{\cnt\}}_{\atts_2}(\raexpr(\db)),\arep{0}{\emptyset})$. \\

    \noindent Finally with the join transformation inference rules in Figure~\ref{fig:join_rules}: \\

    \textbf{Case} \textsc{T-Key1}.
    By the inductive hypothesis, $\vdash\rel:\aadd{0}{}\times\adel{d}{}$. By assumption, $\rel.\fk_{\raexpr}\fkarrow\raexpr.\pk$. By Definition~\ref{def:neighbor}, all changed records from $\raexpr$ match with deleted record from $\rel$. Since $\rel$ may delete more records due to other foreign keys, the number of total deleted records is upper bounded by the maximum ownership $\mown(\ent,\rel)$. So, at most $\mown(\ent,\rel)$ records are deleted in $(\rel\underset{\fk_{\raexpr}=\pk}{\join}\raexpr)(\db')$. Therefore, 
    \begin{displaymath}
        (\rel\underset{\fk_{\raexpr}=\pk}{\join}\raexpr)(\db')\in\action((\rel\underset{\fk_{\raexpr}=\pk}{\join}\raexpr)(\db),\aadd{0}{}\times\adel{d}{}).
    \end{displaymath}

    \textbf{Case} \textsc{T-Key2}.
    By the inductive hypothesis, $\vdash\rel:\arep{k}{\atts}$. By assumption, $\rel.\fk_{\raexpr}\fkarrow\raexpr.\pk$ and $\privatt(\raexpr.\pk)$. By Definition~\ref{def:neighbor}, all changed records from $\raexpr$ match with replaced records from $\rel$. So, at most $k$ records are replaced in $(\rel\underset{\att_1=\att_2}{\join}\raexpr)(\db')$. Each replaced record may replace attributes in $\atts$ as well as $\attr(\raexpr_2)$ due to the replacement of $\fk_{\raexpr}$. Therefore, 
    \begin{displaymath}
        (\rel\underset{\fk_{\raexpr}=\pk}{\join}\raexpr)(\db')\in\action((\rel\underset{\fk_{\raexpr}=\pk}{\join}\raexpr)(\db),\arep{k}{\atts\cup\attr(\raexpr)}).
    \end{displaymath}

    \textbf{Case} \textsc{T-Key3}.
    By the inductive hypothesis, $\vdash\rel:\arep{k_1}{\atts_1}$ and $\vdash\raexpr:\arep{k_2}{\atts_2}$. By assumption, $\rel.\fk_{\raexpr}\fkarrow\raexpr.\pk$. By Definition~\ref{def:neighbor}, all changed records from $\raexpr$ match with replaced records from $\rel$. The size of the join does not change since the join key $\raexpr.\pk$ is public, so at most $k_1$ records are replaced in $(\rel\underset{\fk_{\raexpr}=\pk}{\join}\raexpr)(\db')$. Therefore, 
    \begin{displaymath}
        (\rel\underset{\fk_{\raexpr}=\pk}{\join}\raexpr)(\db')\in\action((\rel\underset{\fk_{\raexpr}=\pk}{\join}\raexpr)(\db),\arep{k_1}{\atts_1\cup\atts_2}).
    \end{displaymath}

    \textbf{Case} \textsc{T-Join1}.
    By the inductive hypothesis, $\vdash\raexpr_1:\aadd{a_1}{}\times\adel{d_1}{}$ and $\vdash\raexpr_2:\aadd{a_2}{}\times\adel{d_2}{}$. There are 5 sources of changed records: (1) A record from $\raexpr_1(\db)$ may match a record added to $\raexpr_2(\db')$, (2) a record from $\raexpr_1(\db)$ may match a record deleted from $\raexpr_2(\db')$, (3) a record from $\raexpr_2(\db)$ may match a record added to $\raexpr_1(\db')$, (4) a record from $\raexpr_2(\db)$ may match a record deleted from $\raexpr_1(\db')$, and (5) a record added to $\raexpr_1(\db')$ may match a record added to $\raexpr_2(\db')$.
    
    Considering each source of changed records: (1) In the worst case, each record added to $\raexpr_2(\db')$ matches with the most popular join key in $\raexpr_1(\db)$, which is at most $\mmf(\raexpr_1.\att_1)$ (Lemma \ref{lemma:mf}). So, at most $a_2\cdot\mmf(\raexpr_1.\att_1)$ records are added. (2) By similar reasoning, at most $d_2\cdot\mmf(\raexpr_1.\att_1)$ records are deleted. (3) By symmetric reasoning, at most $a_1\cdot\mmf(\raexpr_2.\att_2)$ additional records are added and (4) at most $d_1\cdot\mmf(\raexpr_2.\att_2)$ additional records are deleted. (5) At most $a_1\cdot a_2$ additional records are added since each pair of added records may match. The total number of added and deleted records is exactly the action:
    \begin{align*}
        \aadd{\substack{a_1\cdot\mmf(\raexpr_2.\att_2)+ \\ a_2\cdot\mmf(\raexpr_1.\att_1)+ \\ a_1\cdot a_2}}{}{}\times\adel{\substack{d_1\cdot\mmf(\raexpr_2.\att_2)+ \\ d_2\cdot\mmf(\raexpr_1.\att_1)}}{}=\tau
    \end{align*}
    Therefore, $(\raexpr_1\underset{\att_1=\att_2}{\join}\raexpr_2)(\db')\in\action((\raexpr_1\underset{\att_1=\att_2}{\join}\raexpr_2)(\db),\tau)$.

    \textbf{Case} \textsc{T-Join2}.
    By the inductive hypothesis, $\vdash\raexpr_1:\aadd{a}{}\times\adel{d}{}$ and $\vdash\raexpr_2:\arep{k}{\atts}$. There are 4 sources of changed records: (1) A record from $\raexpr_1(\db)$ may match a record replaced from $\raexpr_2(\db')$, (2) a record from $\raexpr_2(\db)$ may match a record added to $\raexpr_1(\db')$, (3) a record from $\raexpr_2(\db)$ may match a record deleted from $\raexpr_1(\db')$, and (4) a record replaced from $\raexpr_2(\db')$ may match a record added to $\raexpr_1(\db')$. 
    
    Considering each source of changed records: (1) In the worst case, each record from $\raexpr_2(\db')$ after replacement matches with the most popular join key in $\raexpr_1(\db)$ (before and after replacement), which is at most $\mmf(\raexpr_1.\att_1)$ (Lemma \ref{lemma:mf}). So, at most $k\cdot\mmf(\raexpr_1.\att_1)$ records are added and deleted. (2) By similar reasoning to \textsc{T-Join1}, at most $a\cdot\mmf(\raexpr_2.\att_2)$ records are added and (3) at most $d\cdot\mmf(\raexpr_2.\att_2)$ records are deleted. (4) At most $a\cdot k$ additional records are added since each pair of added and replaced records may match. The sum of added and deleted records is exactly the action:
    \begin{align*}
        \aadd{\substack{a\cdot\mmf(\raexpr_2.\att_2)+ \\ k\cdot\mmf(\raexpr_1.\att_1)+ \\ a\cdot k}}{}{}\times\adel{\substack{d\cdot\mmf(\raexpr_2.\att_2)+ \\ k\cdot\mmf(\raexpr_1.\att_1)}}{}=\tau
    \end{align*}
    Therefore, $(\raexpr_1\underset{\att_1=\att_2}{\join}\raexpr_2)(\db')\in\action((\raexpr_1\underset{\att_1=\att_2}{\join}\raexpr_2)(\db),\tau)$.

    \textbf{Case} \textsc{T-Join3}.
    By the inductive hypothesis, $\vdash\raexpr_1:\arep{k_1}{\atts_1}$ and $\vdash\raexpr_2:\arep{k_2}{\atts_2}$. There are 3 sources of changed records: (1) A record from $\raexpr_1(\db)$ may match a record replaced from $\raexpr_2(\db')$, and (2) a record from $\raexpr_2(\db)$ may match a record replaced from $\raexpr_1(\db')$
    
    By assumption, the join keys cannot be replaced. So, the size of the join remains constant. Considering each source of changed records: (1) In the worst case, each replaced record in $\raexpr_2(\db')$ matches with the most popular join key in $\raexpr_1(\db)$, which is at most $\mmf(\raexpr_1.\att_1)$ (Lemma \ref{lemma:mf}). So, at most $k_2\cdot\mmf(\raexpr_1.\att_1)$ records are replaced. (2) By symmetric reasoning, at most $k_1\cdot\mmf(\raexpr_2.\att_2)$ additional records are replaced. The total number of replaced records is exactly the action:
    \begin{align*}
        \arep{\substack{k_1\cdot\mmf(\raexpr_2.\att_2)+ \\ k_2\cdot\mmf(\raexpr_1.\att_1)}}{}=\tau
    \end{align*}
    Therefore, $(\raexpr_1\underset{\att_1=\att_2}{\join}\raexpr_2)(\db')\in\action((\raexpr_1\underset{\att_1=\att_2}{\join}\raexpr_2)(\db),\tau)$.

    \textbf{Case} \textsc{T-Join4}.
    By the inductive hypothesis, $\vdash\raexpr_1:\arep{k_1}{\atts_1}$ and $\vdash\raexpr_2:\arep{k_2}{\atts_2}$. There are 2 sources of changed records: (1) A record from $\raexpr_1(\db)$ may match a record replaced from $\raexpr_2(\db')$, (2) a record from $\raexpr_2(\db)$ may match a record replaced from $\raexpr_1(\db')$, and (3) a record replaced from $\raexpr_1(\db')$ may match a record replaced from $\raexpr_2(\db')$.
    
    By assumption, at least one join key can be replaced. So, the size of the join may change. Considering each source of changed records: In the worst case, each replaced record in $\raexpr_2(\db')$ matches with the most popular join key in $\raexpr_1(\db)$ (before and after replacement), which is at most $\mmf(\raexpr_1.\att_1)$ (Lemma \ref{lemma:mf}). So, at most $k_2\cdot\mmf(\raexpr_1.\att_1)$ records are added and deleted. By symmetric reasoning, at most $k_1\cdot\mmf(\raexpr_2.\att_2)$ additional records are added and deleted. (3) At most $k_1\cdot k_2$ additional records are added since each pair of replaced records may match. The total number of added and deleted records is exactly the action:
    \begin{align*}
        \aadd{\substack{k_1\cdot\mmf(\raexpr_2.\att_2)+ \\ k_2\cdot\mmf(\raexpr_1.\att_1)+ \\ k_1\cdot k_2}}{}\times\adel{\substack{k_1\cdot\mmf(\raexpr_2.\att_2)+ \\ k_2\cdot\mmf(\raexpr_1.\att_1)}}{}=\tau
    \end{align*}
    Therefore, $(\raexpr_1\underset{\att_1=\att_2}{\join}\raexpr_2)(\db')\in\action((\raexpr_1\underset{\att_1=\att_2}{\join}\raexpr_2)(\db),\tau)$.
\end{proof}

\noindent The second is Theorem~\ref{thm:sens_soundness} (Sensitivity Soundness):

\begin{proof}
    By induction on the sensitivity rules in Figure~\ref{fig:sens_rules}.

    \textbf{Case} \textsc{S-Cnt1}.
    By the inductive hypothesis, $\vdash\raexpr:\arep{k}{\atts_1}$. Since $\atts_1\cap\atts_2\neq\emptyset$, the grouping attributes $\atts_2$ may be replaced. In the worst case, for each record $\record$, some $\att\in\atts_2$ is replaced such that $\record$ is in a different group in any neighbor. This decreases the count by 1 in its initial group, and increases the count by 1 in its new group. Therefore, the global sensitivity $\Delta$ is at most $2k$.

    \textbf{Case} \textsc{S-Cnt2}.
    By the inductive hypothesis, $\vdash\raexpr:\arep{k}{\atts_1}$. Since $\atts_1\cap\atts_2=\emptyset$, the grouping attributes $\atts_2$ cannot be replaced. Therefore, the count is unchanged in any neighbor, and the global sensitivity $\Delta$ is at most 0.

    \textbf{Case} \textsc{S-Cnt3}.
    By the inductive hypothesis, $\vdash\raexpr:\aadd{a}{}\times\adel{d}{}$. In the worst case, each added and deleted record is in a different group. Therefore, the global sensitivity $\Delta$ is at most $a+d$.

    \textbf{Case} \textsc{S-Sum1}.
    By the inductive hypothesis, $\vdash\raexpr:\arep{k}{\atts_1}$. By assumption, $\att\notin\atts_1$ and $\atts_1\cap\atts_2=\emptyset$, meaning neither the summed attribute nor the grouping attributes can be changed. Therefore, the global sensitivity $\Delta$ is at most 0.

    \textbf{Case} \textsc{S-Sum2}.
    By the inductive hypothesis, $\vdash\raexpr:\arep{k}{\atts_1}$. By assumption, $\att\in\atts_1$ and $\atts_1\cap\atts_2=\emptyset$, meaning the summed attribute can be changed but the grouping attributes cannot be changed. In the worst case, a value at $\att$ is replaced from $L$ to $U$, or from $U$ to $L$. This change is $|L-U|$. Since at most $k$ replacements may occur, the global sensitivity $\Delta$ is at most $k\cdot|L-U|$.

    \textbf{Case} \textsc{S-Sum3}.
    By the inductive hypothesis, $\vdash\raexpr:\arep{k}{\atts_1}$. By assumption, $\atts_1\cap\att_2\neq\emptyset$, meaning the grouping attributes may be changed.
    In the worst case, changing any grouping attribute will cause one record with the most extreme value to move groups. This causes a difference of $\max(|L|,|U|)$ in two groups. Since there are at most $k$ replaced records, the global sensitivity $\Delta$ is at most $2k\cdot\max(|L|,|U|)$.

    \textbf{Case} \textsc{S-Sum4}.
    By the inductive hypothesis, $\vdash\raexpr:\aadd{a}{}\times\adel{d}{}$. Any two neighbors differ by at most $a+d$ records. Each added and deleted record changes the sum of a group by at most $\max(|L|,|U|)$. Therefore, the global sensitivity $\Delta$ is at most $(a+d)\cdot\max(|L|,|U|)$.
\end{proof}

\end{document}